\newtheorem{theorem}{Theorem}[section]
\newtheorem{corollary}[theorem]{Corollary}
\newtheorem{proposition}[theorem]{Proposition}
\newenvironment{proof}[1][Proof]{\begin{trivlist}
\item[\hskip \labelsep {\bfseries #1}]}{\end{trivlist}}
\newenvironment{remark}[1][Remark]{\begin{trivlist}
\item[\hskip \labelsep {\bfseries #1}]}{\end{trivlist}}
\newcommand{\qed}{\nobreak \ifvmode \relax \else
	\ifdim\lastskip<1.5em \hskip- \lastskip
	\hskip 0.5em plus0em minus0.5em \fi \nobreak
	\vrule height0.75em width0.5em depth0.25em\fi}
\begin{document}

\title{Horizon area bound and MOTS stability in locally rotationally symmetric solutions}

\author{Abbas M. \surname{Sherif}}
\email{abbasmsherif25@ibs.re.kr}
\affiliation{Center for Geometry and Physics, Institute for Basic Science (IBS), Pohang 37673, Korea}

\author{Peter K. S. \surname{Dunsby}}
\email{peter.dunsby@uct.ac.za}
\affiliation{Cosmology and Gravity Group, Department of Mathematics and Applied Mathematics, University of Cape Town, Rondebosch 7701, South Africa\\South African Astronomical Observatory, Observatory 7925, Cape Town, South Africa}

\begin{abstract}
In this paper, we study the stability of marginally outer trapped surfaces (MOTS), foliating horizons of the form \(r=X(\tau)\), embedded in locally rotationally symmetric class II perfect fluid spacetimes. An upper bound on the area of stable MOTS is obtained. It is shown that any stable MOTS of the types considered in these spacetimes must be strictly stably outermost, that is, there are no MOTS ``outside" of and homologous to \(\mathcal{S}\). Aspects of the topology of the MOTS, as well as the case when an extension is made to imperfect fluids, are discussed. Some non-existence results are also obtained. Finally, the ``growth" of certain matter and curvature quantities on certain unstable MOTS are provided under specified conditions. 
\end{abstract} 

\maketitle

\tableofcontents

\section{Introduction}
\label{intro}


Stability of marginally outer trapped surfaces (MOTS) (see the references \cite{haw1,haw2,rpac,and1,and2,gal1,gal2}, or more recently, \cite{ib1,ib2,ib3}, where other interpretations of stability have allowed direct comparisons to the geodesic Jacobi operator), a Lorentzian analogue of the well studied notion of stability of minimal surfaces in Riemannian geometry, has generated significant interest over the years due to its utility is proving many useful results in general relativity. A relatively recent interpretation of MOTS stability, \cite{gal2}, has proven very successful in proving higher dimensional results for the topology of black holes, which relies on showing that under certain assumptions on either the initial data set containing the MOTS or the spacetime, the MOTS is of positive Yamabe type, i.e., admits a metric of constant positive scalar curvature. 

Area bounds for black holes have also been obtained under different scenarios. Dain \textit{et al.}, \cite{sd1,sd2,sd3}, obtained lower bounds on the area in terms of the charges and angular momenta. Andersson \textit{et al.} \cite{and3}, again, making use of the stable properties of the MOTS \(\mathcal{S}\), obtained that the area of the black hole either has an upper bound which is a constant function of the curvature, injectivity radius and the volume of the initial data set containing \(\mathcal{S}\) or, another MOTS \(\mathcal{S}'\) lie to the outside of \(\mathcal{S}\). The proof of this area inequality, in particular the gluing approach used, shows that two stably outermost MOTS that are sufficienctly close will have a MOTS surrounding them. This  has been interpreted as demonstrating a well known fact of black hole mergers: two black holes that are sufficiently close will merge (see the reference \cite{andt}).

The main objective of this paper is to study the stability of MOTS in locally rotationally symmetric (LRS) perfect fluids (though in certain cases the results do extend to the case of anisotropic fluids with non-vanishing heat flux) and examine its relationship to upper bounds on the horizon area, as well as the implications for the topology of the MOTS. In doing this, we consider additional restrictions on some geometric quantities. 

Our interest is primarily in dynamical horisons (DH) and/or timelike membranes (TLM), and as was demonstrated in \cite{shef1}, only the class II LRS solutions (those with neither rotation nor spatial twist) admit DH or TLM of a certain functional form. These will therefore be the focus of this work. 

LRS II solutions are vorticity-free solutions, each point of which admits a continuous isotropy group \cite{gebb,ge1}. The LRS class of solutions generalizes the well studied spherically symmetric solutions of the Einstein's field equations. The line element of these solutions carries the specific form

\begin{eqnarray*}
ds^2=-C^2_1d\tau^2+C^2_2dr^2+C^2_3\left(dy^2+D^2dz^2\right),
\end{eqnarray*}
where \(C_{1,2,3}\) are functions of \(\tau\) and \(r\), and \(D\) is a function of \(y\) and a parameter \(k\) which parametrizes the 2-surfaces of constant \(r\) in the \(\tau=\) constant slices: the 2-surface geometry is spherical for \(k=+1\), hyperbolic for \(k=-1\), and flat for \(k=0\). (The \(k=+1\) is the spherically symmetric subclass.)

The approach to be employed in this work is a covariant (and hence gauge invariant) approach, known as the 1+1+2 covariant formalism (see \cite{cc1} and references therein). This approach has recently been used to study horizons in LRS spacetimes \cite{shef2}. Like the powerful 1+3 formalism which allows for the threading of the spacetime along the fluid flow lines with unit tangent vector \(u^{\mu}\), this approach is a specialization, where the 3-space orthogonal to the fluid flow is further threaded along the vorticity, with this direction specified by the unit vector field \(e^{\mu}\). This allows for the projection of tensor and vector quantities, as well as derivatives (a dot \(\dot{\ }\) denotes derivative along \(u^{\mu}\), a hat \(\hat{\ }\), that along \(e^{\mu}\), and \(\delta_{\mu}=N^{\nu}_{\mu}\nabla_{\nu}\) the derivative on the 2-space which results from decomposing the 3-space, along these directions, with \(N^{\mu\nu}\) projecting vectors and tensors orthogonal to \(u^{\mu}\) and \(e^{\mu}\), to the 2-space.) The derivative operator \(\nabla_{\mu}\) denotes the 4-dimensional spacetime covariant derivative. The form of the field equations \(G_{\mu\nu}=T_{\mu\nu}\) is used instead of \(G_{\mu\nu}=8\pi T_{\mu\nu}\).

In regards to horizons, the splitting, when considering the background (i.e. exact) solutions, singles out surfaces in the foliation along \(u^{\mu}\). The particular surfaces we consider are picked out by the introduction of the frame vector \(e^{\mu}\). In our particular case in this work, the split here singles out a ``preferred" horizon. It must however be emphasized that, the 1+1+2 split works for perturbed LRS (non-exact) solutions, in which case we cannot claim that the split picks up a preferred horizon, since the approach is not frame-invariant.  Different frame choices in the perturbed solution will single out different horizons.

For LRS II solutions, one can characterize the spacetimes entirely by the set of covariant scalars \cite{cc1,cc2,cc3}

\begin{eqnarray*}
\mathcal{D}:\equiv\lbrace{\rho,p,Q,\Pi,\mathcal{E},\mathcal{A},\Theta,\Sigma,\phi\rbrace},
\end{eqnarray*}
where \(\rho\equiv T_{\mu\nu}u^{\mu}u^{\nu}\) is the energy density, \(p\equiv\left(1/3\right)h^{\mu\nu}T_{\mu\nu}\) is the pressure (isotropic), \(Q=-T_{\mu\nu}e^{\mu}u^{\nu}\) is the heat flux, \(\Pi=T_{\mu\nu}e^{\mu}e^{\nu}-p\) is the anisotropic stress, \(\mathcal{E}=E_{\mu\nu}e^{\mu}e^{\nu}\) encodes the electric part of the Weyl tensor \(E_{\mu\nu}\), \(\mathcal{A}=\dot{u}_{\mu}e^{\mu}\) is the acceleration, \(\Theta\equiv D_{\mu}u^{\mu}\) is the expansion, \(\Sigma=e^{\mu}e^{\nu}D_{\langle \nu}u_{{\mu}\rangle}\) is the shear and \(\phi=\delta_{\mu}e^{\mu}\) denotes the expansion of the \(2\)-space (referred to as the sheet expansion), with

\begin{eqnarray*}
T_{\mu\nu}&=\rho u_{\mu}u_{\nu} + 2Qe_{(\mu}u_{\nu)} +\left(p+\Pi\right)e_{\mu}e_{\nu} + \left(p-\frac{1}{2}\Pi\right)N_{\mu\nu},
\end{eqnarray*}
being the stress energy tensor, and the derivative operator \(D_{\mu}\) denoting the covariant derivative on the hypersurface to which \(u^{\mu}\) is hypersurface orthogonal. 

The LRS II class of spacetimes can be extended to the more general LRS class by including rotation \(\Omega=\omega^ae_a\) (\(\omega_a\) is the vorticity vector) and spatial twist (the twist of \(e^{\mu}\)) \(\xi=(1/2)\varepsilon^{\mu\nu}\delta_{\mu}e_{\nu}\), as well as \(\mathcal{H}=H_{\mu\nu}e^{\mu}e^{\nu}\) (encoding the magnetic part of the Weyl tensor), in the set \(\mathcal{D}\). (The use of the word vorticity should not be confused with angular momentum)

While in general, for a 1+1+2 decomposed spacetimes, the 2-sheets are not necessarily true surfaces, for LRS II solutions this is true as is seen from the fact that the relation \(\delta_{\mu}\delta_{\nu}\psi=\delta_{\nu}\delta_{\mu}\psi\) holds when acting on an arbitrary scalar \(\psi\) (and trivially so for LRS scalars in the covariant set \(\mathcal{D}\)). In adddition, \(u^{\mu}\) and \(e^{\mu}\) are surface forming since their commutator has no sheet component \cite{cc1,cc2}:

\begin{eqnarray*}
\hat{\dot{\psi}}-\dot{\hat{\psi}}=-\mathcal{A}\dot{\psi}+\left(\frac{1}{3}\Theta+\Sigma\right)\hat{\psi}.
\end{eqnarray*}
For more details, the interested reader is referred to \cite{cc1}.

The field equations for these spacetimes can be obtained using the Ricci identities for the preferred unit directions and the contracted Ricci identities (further details can be found in \cite{cc1}). Throughout this work, whenever needed, we will provide the relevant equations. 

This paper has the following structure: In section \ref{5}, we briefly introduce the notion of MOTS in LRS spacetimes, formulated in the covariant approach employed throughout this paper. A marginally outer trapped tube (MOTT) is also introduced and the nature of their evolution discussed. In Section \ref{mr}, a short overview of the notion of stability of MOTS is given. It is then applied to perfect fluid LRS II spacetimes, and an upper bounds on the area are provided as being necessary and/or sufficient for stability of the MOTS. The topology of the MOTS are then briefly considered as it relates to bound on the horizon area, with accompanying discussions. Section \ref{bound} considers bounds on growth of some of the spacetime variables in the neighborhood of certain unstable MOTS. The results are discussed in Section \ref{8} and ongoing work where the results of this paper would prove useful are also mentioned.


\section{MOTS and MOTT in LRS solutions}\label{5}


Let \(\mathcal{S}\) (assumed to be spacelike and closed) be an embedded \(2\)-surface in a LRS spacetime, where \(u^{\mu}\) and \(e^{\mu}\) are both normal to \(\mathcal{S}\), with \(u_{\mu}u^{\mu}=-1,e_{\mu}e^{\mu}=1\) and \(e_{\mu}u^{\mu}=0\). Then, the respective tangents to the outgoing and ingoing null geodesics are given by (see \cite{rit1})

\begin{eqnarray*}
\begin{split}
k^{\mu}&=\frac{1}{\sqrt{2}}\left(u^{\mu}+e^{\mu}\right),\\
l^{\mu}&=\frac{1}{\sqrt{2}}\left(u^{\mu}-e^{\mu}\right).
\end{split}
\end{eqnarray*}

(We note that, in general, there is a degree of freedom to scale the null vectors as \(k^{\mu}\rightarrow fk^{\mu}\) and \(l^{\mu}\rightarrow f^{-1}l^{\mu}\), for \(f>0\). However, such detail will not be relevant to the rest of this paper. This will be seen shortly as we discuss the causal character of horizons.  Furthermore, in the case of the stability operator that will be introduced later, it was shown in \cite{jl1} that the spectrum of the eigenvalues remain invariant under such scaling.) The divergences of the congruences generated by \(k^{\mu}\) and \(l^{\mu}\), also known as the null expansion scalars with respect to the null directions, are calculated respectively as \cite{shef2}:

\begin{eqnarray*}
\begin{split}
\chi^+&=\chi^+\left(\tau,r\right)=\frac{1}{\sqrt{2}}\left(\frac{2}{3}\Theta-\Sigma+\phi\right),\\
\chi^-&=\chi^-\left(\tau,r\right)=\frac{1}{\sqrt{2}}\left(\frac{2}{3}\Theta-\Sigma-\phi\right),
\end{split}
\end{eqnarray*}
which are just the traces of the null second fundamental forms associated with \(k^{\mu}\) and \(l^{\mu}\) respectively, where we have labelled as \(\tau\) and \(r\) the parameters of the integral curves along the vector fields \(u^{\mu}\) and \(e^{\mu}\). \(\mathcal{S}\) is then said to be \textit{marginally outer trapped} (resp. \textit{outer trapped}) if \(\chi^+=0\)  (resp. \(\chi^+<0\)) \cite{ib4,ash1,ash2}. (Henceforth, we will simply write \(\chi=\chi^+\).) In cases where both expansions vanish, \(\mathcal{S}\) is said to be minimal, a standard case in point being cross sections of the Schwarzschild horizon in isotropic coordinates (notice that for Schwarzschild, \(\Theta=\Sigma=0\), so that \(\chi^+=\phi=0\implies\chi^-=-\phi=0\)). A \textit{marginally outer trapped tube} (MOTT) is then a hypersurface foliated by MOTS. In the case that \(\chi^-<0\), the ``outer" is dropped and the notations MTS and MTT are instead used (see for example \cite{ib5}). From the expressions for \(\chi\) and \(\chi^-\), it is clear that if \(\chi\) vanishes, then, one requires that \(\phi>0\) in order for \(\chi^-<0\) on \(\mathcal{S}\). 

For a MOTT, one can make the choice of a vector field \cite{ib5,ib6}

\begin{eqnarray*}
\mathcal{V}^{\mu}=k^{\mu}-\mathcal{C}l^{\mu},
\end{eqnarray*}
assuming that \(\mathcal{V}^{\mu}\) is tangent to the MOTT and everywhere normal to \(\mathcal{S}\), for some smooth function \(\mathcal{C}\) on the MOTT. Then, there exists a unique vector field \(\tilde{\mathcal{V}}^{\mu}\) which is normal to the MOTT, and is given as

\begin{eqnarray*}
\tilde{\mathcal{V}}^{\mu}=k^{\mu}+\mathcal{C}l^{\mu}.
\end{eqnarray*}
It is easily seen that the sign of \(\mathcal{C}\) specifies the signature of the induced metric on the MOTT. 

While in general the sign of \(\mathcal{C}\) may vary at different points of the MOTT, we are interested in those cases where the sign of \(\mathcal{C}\) remains fixed on the MOTT. For a MOTT which is also an MTT (or alternatively a MOTT with positive sheet expansion), the MOTT is called a dynamical horizon (DH) if \(\mathcal{C}>0\), a timelike membrane (TLM) if \(\mathcal{C}<0\), and an isolated horizon (IH) if \(\mathcal{C}=0\) (in this case the sheet expansion may be zero). Indeed it is necessarily true that, on DH, IH and TLM the sheet expansion is non-negative, with the sheet expansion vanishing if and only if the MTS is minimal.

In the rest of this paper we will be dealing with situations where \(\mathcal{C}\) is constant on each MOTS. In that case, the scalar \(\mathcal{C}\) can be explicitly computed using the formula \cite{ib5} 

\begin{eqnarray}\label{x1}
\mathcal{C}=\frac{\mathcal{L}_k\chi}{ \mathcal{L}_l\chi},
\end{eqnarray}
which, for LRS II spacetimes is given as \cite{shef2}

\begin{eqnarray}\label{x2}
\mathcal{C}=\frac{-\left(\rho+p+\Pi\right)+2Q}{\frac{1}{3}\left(\rho-3p\right)+2\mathcal{E}},
\end{eqnarray}
where \(\mathcal{L}_k\) (\textit{resp.} \(\mathcal{L}_l\)) denotes the Lie derivative along the vector field \(k^{\mu}\) (\textit{resp.} \(l^{\mu}\)). If the null energy condition holds, then \(\mathcal{L}_k\chi\leq0\) so that the sign of the metric signature is specified by the denominator of \eqref{x2}, a curvature condition (see \cite{shef2} for more details).

The term \text{horizon} will be used to refer to MOTT (or MTT), and we will denote by \(\mathcal{H}\), whenever we refer to a DH. MOTS and MOTT will be consistently used, and whenever there is ambiguity, clarity will be provided.

Of interest throughout the rest of the paper will be those TLM and DH of the form \(r=X(\tau)\) in the class of perfect fluid solutions with no anisotropy and vanishing heat flux. It will be assumed throughout that, (at least) for any point \(\bar{p}\) on a MOTT, and a neighborhood \(\mathcal{N}_{\bar{p}}\) of \(\bar{p}\), \(\rho\) is non-negative on \(\mathcal{N}_{\bar{p}}\). (It is possible that this might hold true at all points of the ambient spacetime.) in the case the \(\rho=0\) applies, this will be made explicit. Otherwise, it will be assumed that \(\rho\) is positive.


\section{Area bounds from stability and topological implications}\label{mr}


In this section we consider the relationship between stability of MOTS and the area of the horizons they foliate. The focus of this section will be LRS II perfect fluids, unless otherwise stated.

\subsection{MOTS stability}

The notion of stability for MOTS - analogous to the well understood notion of stability of minimal surfaces - was introduced by Anderson \textit{et al.} \cite{and1,and2}. Simply put, it addresses the following question: does the deformation of a MOTS \(\mathcal{S}\) along the unit normal direction \(e^{\mu}\) leave \(\mathcal{S}\) marginally outer trapped or does \(\mathcal{S}\) become untrapped? In other words, denote by \(\mathcal{S}_t\) the deformation of \(\mathcal{S}\), and let \(\partial/\partial_t=\Psi e^{\mu}\) denote the tangent vector to the curve which generates the variation, for some function \(\Psi\). Is the associated variation of the outward null expansion as

\begin{eqnarray*}
\chi(\Psi):\equiv\left.\frac{\partial}{\partial t}\right|_{t=0}\chi:\equiv\bar{\delta}_{\Psi e}\chi\geq0,
\end{eqnarray*} 
and positive somewhere on \(\mathcal{S}\), or is \(\chi(\Psi)<0\)? A MOTS \(\mathcal{S}\) is said to be \textit{stable} (or \textit{stably outermost}) if \(\chi(\Psi)\geq0\) (with \(\chi(\Psi)\not\equiv0\)), \textit{strictly stable} (or \textit{strictly stably outermost}) if \(\chi(\Psi)>0\) everywhere on \(\mathcal{S}\), and \textit{unstable} otherwise. The above mentioned authors reduced this problem to an eigenvalue problem by introducing a second order elliptic operator \(L_{\mathcal{S}}:C^{\infty}(\mathcal{S})\rightarrow C^{\infty}(\mathcal{S})\) acting on \(\Psi\) as \cite{and1,and2}

\begin{eqnarray}\label{ar1}
\chi(\Psi):\equiv L_{\mathcal{S}}\Psi= - \Delta\Psi+2s^{\mu}\delta_{\mu}\Psi+\biggl(\frac{1}{2}R_{\mathcal{S}}-\left(\rho+J_{\mu}e^{\mu}\right)-\frac{1}{2}\chi_{\mu\nu}\chi^{\mu\nu}+\delta_{\mu}s^{\mu}-s_{\mu}s^{\mu}\biggr)\Psi,
\end{eqnarray}
where \(\Delta\) is the Laplacian on \(\mathcal{S}\), \(R_{\mathcal{S}}\) is the scalar curvature of \(\mathcal{S}\), \(J_{\mu}=G_{\mu\nu}u^{\nu}\) is the local momentum density along the horizon (\(G_{\mu\nu}\) is the Einstein tensor), \(\chi_{\mu\nu}\) is the null second fundamental form associated with \(k^{\mu}\), and the one-form \(s_{\mu}=-(1/2)l_{\nu}\delta_{\mu}k^{\nu}\) is the torsion of the null normal field \(k^{\mu}\) projected to \(\mathcal{S}\).

In general, the operator \(L_{\mathcal{S}}\) is not self-adjoint, a problem which arises due to the presence of the second (linear) term in \eqref{ar1}. However, in certain specific cases - for example, if \(\mathcal{S}\) lies in a time-symmetric horizon (where \(L_{\mathcal{S}}\) reduces to that of minimal surfaces) or, in some cases where the one-form \(s_{\mu}\) is a gradient - \(L_{\mathcal{S}}\) is self-adjoint (this is discussed in \cite{and2,gal1} etc., and references mentioned therein). This means that the eigenvalue spectrum is generally complex. Nonetheless, there is always a \textit{principal eigenvalue} \(\lambda\), which is real, and never bigger than the real part of any other eigenvalue of \(L_{\mathcal{S}}\) such that

\begin{eqnarray}\label{ope}
L_{\mathcal{S}}\Psi=\lambda\Psi,
\end{eqnarray}
where \(\Psi\) can be chosen to be strictly positive. Then, \(\mathcal{S}\) being stably outermost is equivalent to the condition that \(\lambda\geq0\), and \(\mathcal{S}\) being strictly stably outermost is equivalent to the condition that \(\lambda>0\). In either case one chooses the associated eigenfunction as everywhere positive.

From our formulation, it is by now clear the fundamental role that the sheet expansion \(\phi\) plays in locating marginally trapped surfaces and the horizon evolution. In particular, whether we can specialize from a MOTS to a MTS (and hence from a MOTT to an MTT) depends on whether the unit normal \(e^{\mu}\) diverges or converges on the MOTS. What are the implications that such property of \(\phi\) has for stability of the MOTS?

Indeed, strict positivity of \(\lambda\implies\chi(\Psi)>0\) for positive \(\Psi\). Now, let us take the case of a MTS, considering the class of spacetimes in this work, and denote by \(\phi(\Psi)\) the associated variation of the sheet expansion on \(\mathcal{S}\). Suppose \(\chi(\Psi)=0\) on \(\mathcal{S}\). Clearly, if \(\phi(\Psi)\leq0\), then \(\chi^-(\Psi)\geq0\). Under which condition(s) will this affect the sign of \(\chi^-\)? Conversely, suppose that the variation \(\phi(\Psi)\geq0\) (determining the sign of \(\phi(\Psi)\) can be written as an eigenvalue problem in a similar manner as with the case of \(\chi(\Psi)\)). Does the associated eigenvalue - which we denote by \(\bar{\lambda}\) - carry sufficient information about the stability of \(\mathcal{S}\)? This could certainly be of interest. We will briefly return to this in a short while, but a detailed analysis of the situation will be deferred to a future work.

\begin{remark}[Remark 1.]
\textit{If one were to consider a more general 2-surface \(\underline{\mathcal{S}}\) in an arbitrary spacetime admitting a 1+1+2 decomposition, the null vector field \(k^{\mu}\) (\textit{resp}. \(l^{\mu}\)) acquires a sheet component - denote this by \(m^{\mu}\) (\textit{resp}. - \(m^{\mu}\)) - orthogonal to both \(u^{\mu}\) and \(e^{\mu}\). The induced metric on \(\underline{\mathcal{S}}\) can be decomposed as}

\begin{eqnarray*}
\bar{N}_{\mu\nu}=N_{\mu\nu}+P_{\mu\nu},
\end{eqnarray*}
\textit{due to how the spacetime is decomposed, where \(P_{\mu\nu}\) is some symmetric 2-tensor. The expansion scalars \(\chi\) and \(\chi^-\) will therefore acquire the additional terms \(w+P^{\mu\nu}\nabla_{\mu}k_{\nu}\) and \(-w+P^{\mu\nu}\nabla_{\mu}l_{\nu}\) (\(w\) is some scalar), respectively. As per the discussion in the previous paragraph, one would then consider whether the quantity}

\begin{eqnarray*}
(\phi(\Psi)+w(\Psi))+Z(\Psi),
\end{eqnarray*}
\textit{carries sufficient information to determine stability, where we have denoted by \(Z(\Psi)\) the scalar \(P^{\mu\nu}\nabla_{\mu}(e_{\nu}+m_{\nu})\) associated to \(\underline{\mathcal{S}}_t\).}
\end{remark}

For the rest of this work  we shall assume that \(\phi\geq0\), with equality holding only if \(\mathcal{S}\) is minimal. That is, when ``MOTS" is mentioned, it is understood that the consideration allows for a negative sheet expansion.

Now, for a MOTS \(\mathcal{S}\) embedded in a LRS II solution, \(L_{\mathcal{S}}\) is self-adjoint. This follows from the fact that the projection of \(s_{\mu}\) to \(\mathcal{S}\) is zero:

\begin{eqnarray*}
\begin{split}
s_{\mu}&=-\frac{1}{2}N^{\mu\nu}\left(\mathcal{A}u_{\mu}+\left(\frac{1}{3}\Theta+\Sigma\right)e_{\mu}\right)
&=0. 
\end{split}
\end{eqnarray*}
(This is consistent with the fact that this is true in spherically symmetric LRS solutions.) In fact, the vanishing of \(s_{\mu}\) will be true for 2-surfaces in a general LRS solutions with rotation and/or spatial twist. In this case, the one-form \(s_{\mu}\) will take the exact same form as above since the rotation and twist terms scale the area form \(\varepsilon_{\mu\nu}\), which is zero when contracted with the unit vectors.


\subsection{Area bounds}


Let us now consider some results which relate stability of a MOTS to the size of the area of the horizon in which the MOTS lie. Unless otherwise stated, all considerations are with respect to MOTS (or MTS) in LRS II solutions where the MOTS respect the LRS II symmetries. It will also be assumed that the isotropic pressure \(p\) is non-negative.

\begin{proposition}\label{prop1}
Let \(\mathcal{S}\) be a MOTS embedded in a perfect fluid. Then, \(\mathcal{S}\) is stably outermost if and only if the horizon area \(\mathbb{A}\) satisfies

\begin{eqnarray}\label{proeq1}
\mathbb{A}\leq\frac{1}{2\rho}.
\end{eqnarray}
\end{proposition}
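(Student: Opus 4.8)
The plan is to specialize the stability operator \eqref{ar1} to the perfect-fluid LRS II setting, compute its principal eigenvalue explicitly using the rotational symmetry, and then convert the sign condition on \(\lambda\) into an area bound by way of the Gauss--Bonnet theorem. The statement asserts an \emph{iff}, so I would organize the argument as a chain of equivalences: stably outermost \(\iff\lambda\geq0\iff\frac12 R_{\mathcal{S}}\geq\rho\iff\mathbb{A}\leq 1/(2\rho)\).

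First I would reduce each term of \(L_{\mathcal{S}}\). For a perfect fluid \(Q=\Pi=0\), so \(T_{\mu\nu}u^{\nu}=-\rho u_{\mu}\) and hence \(J_{\mu}e^{\mu}=G_{\mu\nu}u^{\nu}e^{\mu}=-\rho\,u_{\mu}e^{\mu}=0\) by orthogonality of \(u^{\mu}\) and \(e^{\mu}\). The text already records \(s_{\mu}=0\) for LRS II surfaces, which both makes \(L_{\mathcal{S}}\) self-adjoint and annihilates the \(2s^{\mu}\delta_{\mu}\Psi\), \(\delta_{\mu}s^{\mu}\) and \(s_{\mu}s^{\mu}\) terms. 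The remaining curvature term requires showing \(\chi_{\mu\nu}\chi^{\mu\nu}\) vanishes on \(\mathcal{S}\): by local rotational symmetry the 2-sheet is isotropic, so \(\chi_{\mu\nu}\) carries no trace-free part and equals \(\tfrac12\chi N_{\mu\nu}\), giving \(\chi_{\mu\nu}\chi^{\mu\nu}=\tfrac12\chi^{2}\), which is zero on a MOTS since \(\chi=0\) there. What survives is \(L_{\mathcal{S}}\Psi=-\Delta\Psi+\left(\tfrac12 R_{\mathcal{S}}-\rho\right)\Psi\).

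Next I would extract the principal eigenvalue. Because \(\mathcal{S}\) respects the LRS II symmetries, \(\rho\) and \(R_{\mathcal{S}}\) are constant on \(\mathcal{S}\), so the potential \(V=\tfrac12 R_{\mathcal{S}}-\rho\) is constant, and self-adjointness lets me use the variational characterization. The spectrum of \(-\Delta+V\) on the closed surface is \(\{V+\mu_{k}\}\) with \(\mu_{k}\geq0\) the Laplace eigenvalues; the smallest is \(V\) itself, attained by the strictly positive constant eigenfunction, which is exactly the admissible choice in \eqref{ope}. Hence \(\lambda=\tfrac12 R_{\mathcal{S}}-\rho\), and by the stability criterion the surface is stably outermost iff \(\lambda\geq0\), i.e. iff \(\tfrac12 R_{\mathcal{S}}\geq\rho\).

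Finally I would translate this into the area bound. For a closed MOTS of spherical topology, Gauss--Bonnet fixes \(\int_{\mathcal{S}}R_{\mathcal{S}}\,dA\) as a topological constant; combined with the constancy of \(R_{\mathcal{S}}\) this determines \(R_{\mathcal{S}}\) as a fixed multiple of \(1/\mathbb{A}\), so that \(\tfrac12 R_{\mathcal{S}}\geq\rho\) (with \(\rho>0\)) rearranges to \(\mathbb{A}\leq 1/(2\rho)\), the equivalence surviving in both directions. The two steps I expect to demand the most care are (i) the vanishing of \(\chi_{\mu\nu}\chi^{\mu\nu}\), which hinges on isotropy of the sheet forcing the null shear to vanish so that only the (marginal) trace contributes, and (ii) the bookkeeping of the numerical constant in the Gauss--Bonnet step together with the field-equation normalization \(G_{\mu\nu}=T_{\mu\nu}\), so that the topological factor and the \(\rho\)-normalization combine to precisely \(1/(2\rho)\). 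This same bookkeeping makes transparent that the bound is tied to the spherical case: a non-positive Euler characteristic would reverse the sign and obstruct stability for positive \(\rho\), which I would flag as the natural entry point to the topological discussion that follows.
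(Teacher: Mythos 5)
Your proposal is correct and follows the paper's proof almost step for step: the same reduction of the stability operator (vanishing of \(s_{\mu}\), of \(J_{\mu}e^{\mu}\) for a perfect fluid, and of \(\chi_{\mu\nu}\chi^{\mu\nu}=\tfrac12\chi^{2}\) on the MOTS), the same observation that LRS scalars have vanishing surface derivatives so the constant function is the principal eigenfunction, and the same resulting eigenvalue \(\lambda=\tfrac12 R_{\mathcal{S}}-\rho=-\tfrac23\rho-\mathcal{E}\). The only genuine divergence is the final step linking curvature to area: you invoke Gauss--Bonnet on the constant-curvature sphere to get \(K=4\pi/\mathbb{A}\) (i.e.\ \(K\mathbb{A}=1/2\) in the paper's \(G_{\mu\nu}=T_{\mu\nu}\) normalization), whereas the paper equates the Booth et al.\ expression \(\mathcal{L}_{l}\chi=\rho-p-1/\mathbb{A}\) with the covariant denominator \(\tfrac13(\rho-3p)+2\mathcal{E}\) of \(\mathcal{C}\) to obtain \(\mathbb{A}=3/\bigl(2(\rho-3\mathcal{E})\bigr)\); the two are the same identity since \(K=\tfrac13\rho-\mathcal{E}\). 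Your route has the merit of making explicit that the area relation rests on spherical topology (which, for the ``if'' direction, must be supplied separately --- the paper does so only later, when it rules out non-spherical MOTS in these horizons altogether, and its own cited formula for \(\mathcal{L}_{l}\chi\) carries the same hidden assumption), while the paper's route keeps everything inside the covariant \(1+1+2\) machinery and feeds directly into the subsequent analysis of \(\mathcal{C}\). Do carry out the unit bookkeeping you flagged: with \(8\pi\to1\) one has \(2\pi\to1/4\), so \(K=4\pi/\mathbb{A}\to1/(2\mathbb{A})\), and \(\lambda\geq0\iff K\geq\rho\iff\mathbb{A}\leq1/(2\rho)\) exactly as claimed.
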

\begin{proof}
The 4-covariant derivative in LRS spacetimes, pulled back to \(\mathcal{S}\), vanishes when acting on scalars (\(\nabla_{\mu}\psi=-\dot{\psi}u_{\mu}+\hat{\psi}e_{\mu}\) \(\forall\) smooth functions \(\psi\) in any LRS spacetime) so that the surface covariant derivative and hence the Laplacian vanish when acting on scalars respecting the LRS II symmetries. We note that ``scalars" in LRS spacetimes, in context of the formulation used, is understood to mean elements of the covariant set \(\mathcal{D}\). With respect to the current work, this means that our considerations, for the remainder of this work, are those principal eigenvalues with constant eigenfunctions \(\Psi\). 

Now, assume such positive \(\Psi\) exists as in \eqref{ope}. Then, the principal eigenvalue \(\lambda\), which is real, simply reduces to

\begin{eqnarray}\label{proeq2}
\lambda=\frac{1}{2}R_{\mathcal{S}}-\left(\rho+J_{\mu}e^{\mu}\right)-\frac{1}{2}\chi_{\mu\nu}\chi^{\mu\nu},
\end{eqnarray}
with \(R_{\mathcal{S}}=2K\), where \(K\) is the Gaussian curvature of \(\mathcal{S}\) given by (we note that \(\mathcal{S}\) is a MOTS) (see \cite{rit1})

\begin{eqnarray*}
K=\frac{1}{3}\rho-\mathcal{E}.
\end{eqnarray*}
The local momentum density and the null second fundamental form associated to \(k^{\mu}\) are given by 

\begin{eqnarray*}
\begin{split}
J_{\mu}&=-\left(\rho u_{\mu}+Qe_{\mu}\right),\\
\chi_{\mu\nu}&=\frac{1}{2}\chi N_{\mu\nu}\overset{\mathrm{\mathcal{S}}}{=}0,
\end{split}
\end{eqnarray*}
so that for perfect fluids, \(J_{\mu}e^{\mu}=0\). Explicitly, \eqref{proeq2} takes the form

\begin{eqnarray*}
\lambda=-\frac{2}{3}\rho-\mathcal{E}. 
\end{eqnarray*}

(Note that the first term is non-positive by assumption.) Now, the Lie derivative of \(\chi\) along \(l^{\mu}\) can be written as \cite{ib5}

\begin{eqnarray}\label{proeq3}
\rho-p-\frac{1}{\mathbb{A}},
\end{eqnarray}
up to a factor of \(2\pi\) (which is \(1/4\) in the natural units), which upon comparing to the denominator of \eqref{x2} (also up to a factor of \(2\pi\)) gives

\begin{eqnarray}\label{proeq4}
\mathbb{A}=\frac{3}{2\left(\rho-3\mathcal{E}\right)}.
\end{eqnarray}
Therefore, if \(\lambda\geq0\), then, the inequality \eqref{proeq1} follows from using \eqref{proeq4} to substitute for \(\mathcal{E}\) in \eqref{proeq1}. It is also easy to see that if the inequality \eqref{proeq1} holds, then \(\mathcal{E}\leq-(2/3)\rho\implies\lambda\geq0\).\qed
\end{proof}

We note that in the non-vacuum case a necessary condition for stability of the MOTS is that the Weyl scalar is strictly negative on the MOTS. In the vacuum case, of course this is both necessary and sufficient for strict stability.

Indeed, if the spacetime in which the MOTS is embedded is conformally flat, i.e. \(\mathcal{E}=0\) (we could simply impose that this condition is horizon compatible and does not need to hold on he entire spacetime), then the MOTS are always unstable for positive \(\rho\). A well known case of this consideration is that of MOTS contained in TLM in the Oppenheimer-Snyder dust collapse.

Another spacetime of wide ranging interests, which falls in the class of spacetimes considered in this work, is the Lemaitre-Tolman-Bondi (LTB) spacetimes. It was shown in \cite{shef2} that DH in these spacetimes must satisfy \(\rho<-6\mathcal{E}\), and was further demonstrated that stability of MOTS implies that the horizon is a DH. It is clear that DH in these spacetimes will necessarily satisfy the following upper bound on the area:

\begin{eqnarray*}
\mathbb{A}<\frac{1}{\rho}.
\end{eqnarray*}
We see that \eqref{proeq1} is a refinement of the above inequality. This therefore suggests that there may be DH in these spacetimes containing unstable MOTS. In particular, MOTS that lie in a DH with area within the range

\begin{eqnarray*}
\frac{1}{2\rho}<\mathbb{A}<\frac{1}{\rho},
\end{eqnarray*}
will be unstable.

Proposition \ref{prop1} suggests that the non blow-up of the horizon is necessary for stability and it is ensured by \(\mathcal{E}<0\). Here we see that the cut-off on the area is rather a criteria and/or consequence of stability. Here we have specified the upper bound on the horizon area as the inverse of the local energy density.

Now, we had earlier pondered what implications, in the case of a MTS within context of our formulation, would the sign of the variation of the sheet expansion have on stability of the MTS. As an eigenvalue problem, in a general 1+1+2 spacetime, the variation of \(\phi\) should give rise to principal eigenvalue of the form

\begin{eqnarray}\label{dfr}
\bar{\lambda}=\sqrt{2}\lambda-\mbox{extra term},
\end{eqnarray}
where the ``extra term" is a linear combination of any two of the three scalars \(\Theta,\Sigma,\phi\), and some scalars formed from sheet terms. Hence, it appears that the sign of \(\phi(\Psi)\) does carry information about stability of the MOTS. Of course, for the present work, stability check is straightforward because of the form \(\lambda\) takes, and therefore, there is no need for \(\bar{\lambda}\). However, this could be useful in more general settings.

Let us now prove the following:

\begin{theorem}
Let \(\mathcal{S}\) be a MOTS contained in a horizon embedded in a perfect fluid. Then, the principal eigenvalue is non-zero.
\end{theorem}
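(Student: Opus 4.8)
The plan is to reduce the principal eigenvalue to a single curvature expression and then obstruct its vanishing by a topological (integrality) argument. First I would use the computation already carried out in the proof of Proposition \ref{prop1}: for a constant eigenfunction the principal eigenvalue collapses to \(\lambda=-\tfrac{2}{3}\rho-\mathcal{E}\), while the Gaussian curvature of the MOTS is \(K=\tfrac{1}{3}\rho-\mathcal{E}\). Subtracting, these combine into the clean identity \(\lambda=K-\rho\), which is the form I would carry forward. I would also record that the area formula \eqref{proeq4} forces \(\rho-3\mathcal{E}>0\), i.e. \(K=\tfrac{1}{3}(\rho-3\mathcal{E})>0\); since the elements of \(\mathcal{D}\) are constant on \(\mathcal{S}\), a positive constant \(K\) already confines the topology of \(\mathcal{S}\) to a sphere via Gauss--Bonnet, which streamlines the case analysis.

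Next I would argue by contradiction and suppose \(\lambda=0\). Then \(K=\rho\), and since \(\rho>0\) on the relevant neighbourhood we have \(K=\rho>0\). Because \(K\) and \(\rho\) are constant on \(\mathcal{S}\), the Gauss--Bonnet theorem reduces to \(K\,\mathbb{A}=2\pi\chi_{E}(\mathcal{S})\), where \(\chi_{E}(\mathcal{S})\) denotes the Euler characteristic; substituting \(K=\rho\) gives \(\rho\,\mathbb{A}=2\pi\chi_{E}(\mathcal{S})\). On the other hand, \(\lambda=0\) is equivalent to \(\mathcal{E}=-\tfrac{2}{3}\rho\), so \(\rho-3\mathcal{E}=3\rho\) and \eqref{proeq4} yields \(\mathbb{A}=\tfrac{1}{2\rho}\), whence \(\rho\,\mathbb{A}=\tfrac{1}{2}\). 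Equating the two evaluations of \(\rho\,\mathbb{A}\) forces \(\chi_{E}(\mathcal{S})=\tfrac{1}{4\pi}\), which is not an integer. Since the Euler characteristic of a closed surface is an integer, this is a contradiction, and therefore \(\lambda\neq0\).

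The step I expect to be the main obstacle is the bookkeeping of the \(2\pi\) factor (equivalently the \(1/4\) in natural units) flagged beneath \eqref{proeq3}: the area formula \eqref{proeq4} and the Gauss--Bonnet normalization must be held in one consistent convention, so that the mismatch \(\tfrac{1}{2}\neq 2\pi\cdot(\text{integer})\) is a genuine obstruction and not an artefact of rescaling. The robust content I would emphasise is that \eqref{proeq4} assigns a \emph{continuous} value to \(\rho\,\mathbb{A}\) at \(\lambda=0\), whereas Gauss--Bonnet constrains \(\rho\,\mathbb{A}=K\mathbb{A}\) to a \emph{discrete} set; once the convention is fixed, discreteness excludes the value produced by \eqref{proeq4}. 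A secondary point to check is that the constant-eigenfunction reduction underlying \eqref{proeq2} and the derivation of \eqref{proeq4} hold simultaneously, so that substituting \(\mathcal{E}=-\tfrac{2}{3}\rho\) into the area formula is legitimate in exactly the regime where \(\lambda=K-\rho\) was obtained.
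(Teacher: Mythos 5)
Your reduction $\lambda=-\tfrac{2}{3}\rho-\mathcal{E}=K-\rho$ is correct, but the topological obstruction you build on it does not survive the normalization issue you yourself flag, and in fact that issue is fatal rather than a matter of bookkeeping. The quantity $\mathbb{A}$ in \eqref{proeq4} is obtained by matching $\rho-p-1/\mathbb{A}$ against the denominator of \eqref{x2} only ``up to a factor of $2\pi$''; algebraically, \eqref{proeq4} is nothing but the statement $1/\mathbb{A}=\tfrac{2}{3}(\rho-3\mathcal{E})=2K$, i.e.\ $K\,\mathbb{A}=\tfrac{1}{2}$ \emph{identically on every MOTS of this class}, with no reference to $\lambda$. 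Your contradiction is produced by feeding this rescaled $\mathbb{A}$ into the Gauss--Bonnet relation $K\,\mathbb{A}=2\pi\underline{\chi}(\mathcal{S})$, which requires the genuine geometric area. If the two were the same object, you would conclude $\underline{\chi}(\mathcal{S})=\tfrac{1}{4\pi}$ for \emph{every} MOTS, not just one with $\lambda=0$ --- the hypothesis $\lambda=0$ enters your argument only to rewrite $K\,\mathbb{A}$ as $\rho\,\mathbb{A}$, which changes nothing. An argument that rules out all MOTS proves too much. Once the conventions are made consistent, a constant-curvature sphere with $K=\rho>0$ has geometric area $4\pi/\rho$ and satisfies Gauss--Bonnet exactly; the ``discrete versus continuous'' tension evaporates, and no information about $\lambda$ is extracted. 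So there is a genuine gap: the topological route, as set up here, cannot distinguish $\lambda=0$ from $\lambda\neq0$.

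The paper's proof is of an entirely different, dynamical character. It first shows via the propagation equations that $\lambda=\hat{\chi}$, so $\lambda=0$ forces $\mathcal{L}_l\chi=\mathcal{L}_k\chi$ and hence $\mathcal{C}=1$ in \eqref{x1}. With $\mathcal{C}=1$ the evolution vector $\mathcal{V}^{\mu}=k^{\mu}-l^{\mu}\propto e^{\mu}$, so $u^{\mu}$ is everywhere normal to the horizon, which is then a constant-$\tau$ slice on which all evolution equations become constraints. The energy-density constraint gives $\Theta(\rho+p)=0$, hence $\Theta=0$, and the remaining constraints for $\tfrac{2}{3}\Theta-\Sigma$ and $\phi$ combine (with $\lambda=0$) to force $\rho+p=0$, contradicting the perfect-fluid assumptions. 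If you want to salvage your strategy, the curvature identity $\lambda=K-\rho$ must be supplemented by input from the field equations along the horizon, which is precisely what the paper's argument supplies.
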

\begin{proof}
Suppose the principal eigenvalue is zero. Using the propagation equations \cite{cc1}

\begin{eqnarray*}
\begin{split}
\frac{2}{3}\hat{\Theta}-\hat{\Sigma}&=\frac{3}{2}\phi\Sigma,\\
\hat{\phi}&=-\frac{1}{2}\phi^2+\left(\frac{1}{3}\Theta+\Sigma\right)\left(\frac{2}{3}\Theta-\Sigma\right)-\frac{2}{3}\rho-\mathcal{E},
\end{split}
\end{eqnarray*}
it is easily checked that \(\lambda=\hat{\chi}\). Therefore if \(\lambda=0\), we have that \(\hat{\chi}=0\). Then, we have that \(\mathcal{C}=1\), and the horizon is a DH. From the definition of \(\mathcal{V}^{\mu}\), we have that \(u^{\mu}\) is everywhere orthogonal to the horizon, i.e. a constant time slice. Now, of course all evolution equations on such a hypersurface are constraints. The constraint from the evolution of the local energy density in this case is (again, see \cite{cc1} for the full set of equations)

\begin{eqnarray*}
0=\Theta\left(\rho+p\right),
\end{eqnarray*}
so that \(\Theta\) must vanish of the DH since \(\rho+p\neq0\). We therefore have, from the evolution equations for \((2/3)\Theta-\Sigma\) and \(\phi\), the respective constraints

\begin{eqnarray*}
\begin{split}
0&=-\left(\rho+p\right)+\phi\left(\mathcal{A}-\frac{1}{2}\phi\right)+\lambda,\\
0&=\phi\left(\mathcal{A}-\frac{1}{2}\phi\right),
\end{split}
\end{eqnarray*}
from which we have (by setting \(\lambda=0\)) \(\rho+p=0\), again clearly not possible on a DH. Hence, \(\lambda\) cannot be zero.\qed
\end{proof}

Actually, it can be easily checked that the above result extends to MOTS in any LRS II solution.

\begin{corollary}
A MOTS \(\mathcal{S}\) contained in a horizon embedded in a perfect fluid, is either unstable or strictly stably outermost. In particular, \(\mathcal{S}\) is stably outermost if an only if the horizon area satisfies

\begin{eqnarray}\label{nbay}
\mathbb{A}<\frac{1}{2\rho}.
\end{eqnarray}
\end{corollary}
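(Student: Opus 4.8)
The plan is to obtain the corollary by combining Proposition~\ref{prop1} with the Theorem immediately preceding it, the latter being precisely what is needed to eliminate the borderline case. Throughout I would keep in mind the equivalences established earlier: $\mathcal{S}$ is stably outermost exactly when the principal eigenvalue satisfies $\lambda \geq 0$, strictly stably outermost when $\lambda > 0$, and unstable when $\lambda < 0$.

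For the dichotomy (``either unstable or strictly stably outermost''), I would invoke the Theorem directly. Since $\mathcal{S}$ is a MOTS contained in a horizon embedded in a perfect fluid, its principal eigenvalue is non-zero. Consequently every such MOTS satisfies exactly one of $\lambda > 0$ or $\lambda < 0$, i.e. it is either strictly stably outermost or unstable, and the intermediate ``marginally stable'' possibility $\lambda = 0$ simply cannot occur.

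For the sharpened area characterisation, I would first recall from Proposition~\ref{prop1} that stable outermostness is equivalent to $\mathbb{A} \leq 1/(2\rho)$. The key step is to identify the equality case $\mathbb{A} = 1/(2\rho)$ with the vanishing of $\lambda$: using the explicit expression $\lambda = -(2/3)\rho - \mathcal{E}$ together with the area formula $\mathbb{A} = 3/[2(\rho - 3\mathcal{E})]$ from that same proof, one sees that $\mathbb{A} = 1/(2\rho)$ forces $\mathcal{E} = -(2/3)\rho$, which is exactly the condition $\lambda = 0$. Since the Theorem forbids $\lambda = 0$ on such a MOTS, this equality is unattainable, so the non-strict bound $\mathbb{A} \leq 1/(2\rho)$ collapses to the strict inequality $\mathbb{A} < 1/(2\rho)$, and stable outermostness here coincides with strict stable outermostness.

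I do not expect any genuine obstacle; the statement is a direct consequence of the two preceding results. The only point meriting explicit verification is that the equality case of Proposition~\ref{prop1} coincides with the vanishing of $\lambda$ that the Theorem rules out — a one-line computation matching the closed forms of $\lambda$ and $\mathbb{A}$, as sketched above.
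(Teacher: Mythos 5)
Your argument is correct and is precisely the (implicit) route the paper intends: the Theorem rules out $\lambda=0$, and the equality case $\mathbb{A}=1/(2\rho)$ of Proposition~\ref{prop1} is exactly $\mathcal{E}=-(2/3)\rho$, i.e.\ $\lambda=0$, so the non-strict bound sharpens to a strict one and stable coincides with strictly stable. No gaps.
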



\subsection{Comments on the topology of the MOTS}


Here we provide some comments related to the topology of stable MOTS in perfect fluids and the accompanying implications. Additional analysis is carried out in the case that one extends to imperfect fluids.  

In LRS \textcolor{red}{II} solutions, the 2-surfaces under consideration are allowed to have spherical, flat (including toroidal) as well as hyperbolic topologies. Let us integrate \eqref{proeq2} over the area of \(\mathcal{S}\), and use the Gauss-Bonnet theorem (noting that \(\mathcal{S}\) is closed) so that we have

\begin{eqnarray}\label{topo1}
2\pi\underline{\chi}(\mathcal{S})=\lambda[\mathcal{S}]+\int_{\mathcal{S}}\rho,
\end{eqnarray}
with \(\underline{\chi}(\mathcal{S})\) denoting the Euler characteristics of \(\mathcal{S}\). Hence, for a stably outermost MOTS, \(\lambda[\mathcal{S}]>0\) and so \(\mathcal{S}\) can only have a spherical topology. (Note that the second term on the RHS of \eqref{topo1} ie non-negative by assumption (\(\rho\geq0\implies\)) DEC holds.) This topological result is a particular case of the following more general result due to Galloway \cite{gal2}, extending to higher dimensions:

\begin{theorem}[Galloway]
Let \((V^n,h,\mathcal{K})\) be an initial data set satisfying the dominant energy condition (DEC), \(\rho\geq J_{\mu}J^{\mu}\). If \(\tilde{\Sigma}^{n-1}\) is an outermost MOTS in \((V^n,h,\mathcal{K})\), then \(\tilde{\Sigma}^{n-1}\) admits a metric of positive scalar curvature.
\end{theorem}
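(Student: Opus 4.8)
The plan is to adapt the symmetrization argument underlying this result (due to Galloway, with Schoen), which upgrades the two-dimensional Gauss--Bonnet computation behind \eqref{topo1} to arbitrary dimension $n-1\geq 3$. Since $\tilde{\Sigma}$ is outermost it is in particular stably outermost, so by the discussion around \eqref{ope} the principal eigenvalue $\lambda$ of the (generally non-self-adjoint) stability operator $L_{\tilde{\Sigma}}$ is real and nonnegative, with a strictly positive eigenfunction $\Psi_{0}$. In higher dimensions $L_{\tilde{\Sigma}}$ keeps the structural form \eqref{ar1}, so I write $L_{\tilde{\Sigma}}\Psi=-\Delta\Psi+2s^{\mu}\delta_{\mu}\Psi+(Q+\delta_{\mu}s^{\mu}-s_{\mu}s^{\mu})\Psi$ with potential $Q=\tfrac{1}{2}R_{\tilde{\Sigma}}-(\rho+J_{\mu}e^{\mu})-\tfrac{1}{2}\chi_{\mu\nu}\chi^{\mu\nu}$. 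The goal is to transfer sign information from $L_{\tilde{\Sigma}}$ to the \emph{self-adjoint} Schr\"odinger operator $L_{0}=-\Delta+Q$, and from there to the conformal Laplacian of $\tilde{\Sigma}$.

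The central step is to remove the first-order torsion term by completing the square against the positive eigenfunction. Setting $u=\ln\Psi_{0}$ and dividing $L_{\tilde{\Sigma}}\Psi_{0}=\lambda\Psi_{0}$ by $\Psi_{0}$ gives the pointwise identity $Q=\lambda+\Delta u-\delta_{\mu}s^{\mu}+|\delta u-s|^{2}$. Feeding this into the Rayleigh quotient of $L_{0}$ for an arbitrary test function $\psi$ and integrating by parts on the closed surface $\tilde{\Sigma}$, all cross terms reassemble into a perfect square:
\[
\int_{\tilde{\Sigma}}\left(|\delta\psi|^{2}+Q\psi^{2}\right)=\int_{\tilde{\Sigma}}\left(\lambda\psi^{2}+|\delta\psi-\psi W|^{2}\right)\geq\lambda\int_{\tilde{\Sigma}}\psi^{2}\geq 0,
\]
where $W=\delta u-s$. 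Hence the principal eigenvalue of the symmetric operator satisfies $\lambda_{1}(L_{0})\geq\lambda\geq 0$, so $L_{0}$ is a nonnegative operator.

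Next I would invoke the hypotheses. The dominant energy condition bounds the momentum density by the energy density, $|J_{\mu}e^{\mu}|\leq\rho$, forcing $\rho+J_{\mu}e^{\mu}\geq 0$; together with $\chi_{\mu\nu}\chi^{\mu\nu}\geq 0$ this gives $Q\leq\tfrac{1}{2}R_{\tilde{\Sigma}}$ pointwise. Combined with the display above, $\int_{\tilde{\Sigma}}(|\delta\psi|^{2}+\tfrac{1}{2}R_{\tilde{\Sigma}}\psi^{2})\geq 0$ for every $\psi$. Because the conformal Laplacian in dimension $m=n-1\geq 3$ carries the coefficient $\tfrac{4(m-1)}{m-2}=\tfrac{4(n-2)}{n-3}>2$, the coercivity of $-\Delta+\tfrac{1}{2}R_{\tilde{\Sigma}}$ propagates to the conformal Laplacian $-\tfrac{4(n-2)}{n-3}\Delta+R_{\tilde{\Sigma}}$, whose principal eigenvalue is then nonnegative. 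Since the sign of this eigenvalue coincides with the sign of the Yamabe invariant of the conformal class, $\tilde{\Sigma}$ is of nonnegative Yamabe type; in the strictly stable case $\lambda>0$ the inequality is strict, yielding a conformal metric of positive scalar curvature.

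The main obstacle is the borderline case $\lambda=0$, where the whole chain is saturated: equality forces $\psi$ constant, $R_{\tilde{\Sigma}}\equiv 0$, $\chi_{\mu\nu}=0$ and $\rho+J_{\mu}e^{\mu}=0$, a rigid configuration in which $\tilde{\Sigma}$ need only be Yamabe-flat (for instance a flat torus) and may genuinely fail to admit positive scalar curvature. This is exactly where the \emph{outermost} assumption must do more than bare stability: the expectation is that the rigid data propagate, producing an outer foliation by marginally outer trapped surfaces homologous to $\tilde{\Sigma}$ and thereby contradicting outermost-ness. Carrying out this rigidity and propagation analysis to exclude the degenerate case, and so secure strict positivity of the Yamabe invariant, is the delicate part of the argument; the earlier symmetrization and the conformal-coefficient comparison are essentially mechanical once the perfect-square identity is in hand.
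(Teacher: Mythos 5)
First, note that the paper does not prove this statement at all: it is quoted verbatim as a known background theorem, attributed to Galloway and cited to \cite{gal2}, and is used only to contextualize the two-dimensional Gauss--Bonnet argument around \eqref{topo1}. So there is no ``paper proof'' to compare against; your proposal has to be judged against the actual Galloway(--Schoen) argument it is reconstructing. On that score, the non-degenerate part of your sketch is accurate and matches the published proof: the substitution \(u=\ln\Psi_0\) into \(L_{\tilde\Sigma}\Psi_0=\lambda\Psi_0\) does yield \(Q=\lambda+\Delta u-\delta_\mu s^\mu+|\delta u-s|^2\), the integration by parts does reassemble into \(\int(|\delta\psi-\psi W|^2+\lambda\psi^2)\), the DEC gives \(\rho+J_\mu e^\mu\ge \rho-|J|\ge 0\) so \(Q\le\tfrac12 R_{\tilde\Sigma}\), and the comparison of the coefficient \(2\) against \(\tfrac{4(n-2)}{n-3}>2\) transfers nonnegativity to the conformal Laplacian. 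This is exactly the symmetrization trick of Galloway--Schoen.

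The genuine gap is the one you yourself flag and then do not close: the borderline case in which the first eigenvalue of the conformal Laplacian vanishes, so that \(\tilde\Sigma\) is only of nonnegative Yamabe type (e.g.\ a flat torus) and the stated conclusion fails. This is not a minor loose end --- it is precisely the step where the hypothesis \emph{outermost}, as opposed to merely \emph{stable}, does any work, and without it you have only proved a strictly weaker theorem (positive Yamabe type for strictly stable MOTS, nonnegative type for stable ones). Closing it requires the rigidity analysis of Galloway's follow-up paper: one shows that saturation forces \(\chi_{\mu\nu}=0\), \(\rho+J_\mu e^\mu=0\) and a Ricci-flat induced geometry, then uses an inverse-function-theorem deformation to produce a foliation by surfaces of constant outer null expansion on the outside of \(\tilde\Sigma\), and finally extracts from it an outer trapped (or marginally outer trapped) surface outside of and homologous to \(\tilde\Sigma\), contradicting outermost-ness. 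Your proposal names this expectation correctly but supplies neither the rigidity computation nor the deformation argument, so as written the proof is incomplete exactly at the point where the theorem's distinctive hypothesis is consumed. One smaller caution: you assert that ``outermost implies stably outermost''; this is true but is itself a nontrivial input (it rests on the existence theory for outer trapped regions) and deserves at least a citation rather than being taken for granted.
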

(Of course in case \(n=3\), \(\tilde{\Sigma}\) is a topological 2-sphere.) However, we would like to know if the causal character of the horizon containing the MOTS is restricted by the stability criteria and/or the topology. Of course we can immediately rule out stable pressureless TLM. We can however completely rule out the possible existence of a stable MOTS in a TLM embedded in a perfect fluid: for a TLM in LRS II, the area satisfies

\begin{eqnarray*}
\mathbb{A}>\frac{1}{\rho-p}.
\end{eqnarray*}
Comparing the above to \eqref{nbay} we see that the MOTS are stable provided that \(\rho-p>2\rho\implies-\left(\rho+p\right)>0\), which is not possible given that here the NEC is assumed (also implies the weak energy condition strictly \(\rho+p>0\)). We therefore state this as the following non-existence result

\begin{proposition}\label{prot1}
A TLM embedded in a perfect fluid cannot contain a stable MOTS.
\end{proposition}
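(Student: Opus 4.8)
The plan is to argue by contradiction, playing the area lower bound that characterizes a TLM against the area upper bound that the Corollary shows is equivalent to stability. Suppose some TLM embedded in a perfect fluid did contain a stable MOTS $\mathcal{S}$. Then two area constraints must hold simultaneously on $\mathcal{S}$: the quoted bound for a TLM in LRS II, $\mathbb{A} > 1/(\rho-p)$, and, because $\mathcal{S}$ is stably outermost, the Corollary (equation \eqref{nbay}), which forces $\mathbb{A} < 1/(2\rho)$. Chaining these gives $1/(\rho-p) < \mathbb{A} < 1/(2\rho)$, so in particular $1/(\rho-p) < 1/(2\rho)$.

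Next I would invert this last inequality to extract a condition on the fluid variables. With both denominators positive (justified below), taking reciprocals reverses the inequality and yields $2\rho < \rho - p$, i.e.\ $\rho + p < 0$. This directly contradicts the null energy condition assumed throughout---equivalently the strict weak energy condition $\rho + p > 0$---and the contradiction shows that no stable MOTS can lie in such a TLM.

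The deduction is almost entirely formal once the two area inequalities are available, so I expect no real obstacle in the chain of reasoning. The single point that warrants care, and which I would flag as the crux, is the sign of $\rho - p$ used in the inversion: a negative value would render the TLM bound vacuous rather than a genuine positive lower bound. This sign is, however, forced by the TLM condition itself. Reading $\mathcal{L}_l\chi = \rho - p - 1/\mathbb{A}$ off the computation leading to \eqref{proeq4}, the defining requirement $\mathcal{C} < 0$ together with the numerator $-(\rho+p) \leq 0$ of \eqref{x2} under the NEC makes $\mathcal{L}_l\chi > 0$, whence $\rho - p > 1/\mathbb{A} > 0$. Thus both $\rho > 0$ and $\rho - p > 0$ hold, the inversion is legitimate, and the argument closes.
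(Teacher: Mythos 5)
Your proof is correct and follows essentially the same route as the paper: combining the TLM area lower bound $\mathbb{A}>1/(\rho-p)$ with the stability upper bound \eqref{nbay} to force $\rho+p<0$, contradicting the NEC. Your additional verification that $\rho-p>0$ (via the sign of $\mathcal{L}_l\chi$ required by $\mathcal{C}<0$) is a welcome tightening of a step the paper leaves implicit, but it does not change the argument.
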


Also, notice how \eqref{topo1} allows us to rule out the other topologies if we assume stability: \(\underline{\chi}(\mathcal{S})=\) (\textit{resp.} \(<0\)) would require that \(\lambda=\) (\textit{resp.} \(<-\rho\)).

Now, suppose that we wish to extend our considerations to imperfect fluids with \(\Pi=0\), so that \(J\not\equiv0\). If \(Q<0\), there is no problem here since \(\rho-Q\) is always positive. (Of course this means that we can immediately rule out stable non-spherical 2-surface geometries.) We will show the following.

\begin{theorem}\label{goto1}
Any stable MOTS \(\mathcal{S}\) in an imperfect LRS II fluid with vanishing stress anisotropy is a topological sphere.
\end{theorem}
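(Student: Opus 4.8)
The plan is to reduce the topological question to the sign of a single integral over $\mathcal{S}$, exactly as in the derivation of \eqref{topo1}, but now retaining the heat flux. First I would specialize the principal eigenvalue \eqref{ar1} to an imperfect LRS II fluid with $\Pi=0$. As in the proof of Proposition \ref{prop1}, on such a MOTS the one-form $s_{\mu}$ vanishes, the Laplacian annihilates the (constant) LRS eigenfunction, and $\chi_{\mu\nu}=\frac{1}{2}\chi N_{\mu\nu}\overset{\mathcal{S}}{=}0$, so only the zeroth-order part of $L_{\mathcal{S}}$ survives. The one change from the perfect-fluid computation is the momentum term: since $J_{\mu}=-(\rho u_{\mu}+Qe_{\mu})$, one now has $J_{\mu}e^{\mu}=-Q$, hence $\rho+J_{\mu}e^{\mu}=\rho-Q$. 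Using $R_{\mathcal{S}}=2K$ with $K=\frac{1}{3}\rho-\mathcal{E}$ (the Gaussian curvature formula is insensitive to $Q$, since the heat flux does not enter the purely tangential sectional curvature of $\mathcal{S}$), this gives
\begin{eqnarray*}
\lambda=\frac{1}{2}R_{\mathcal{S}}-(\rho-Q)=-\frac{2}{3}\rho-\mathcal{E}+Q,
\end{eqnarray*}
so the sole modification relative to \eqref{proeq2} is the additive $+Q$.

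Next I would integrate this identity over $\mathcal{S}$ and invoke Gauss--Bonnet precisely as for \eqref{topo1}. Since $\lambda$ is constant on $\mathcal{S}$ and $\int_{\mathcal{S}}K=2\pi\underline{\chi}(\mathcal{S})$, this produces
\begin{eqnarray*}
2\pi\underline{\chi}(\mathcal{S})=\lambda\,\mathbb{A}+\int_{\mathcal{S}}\left(\rho-Q\right).
\end{eqnarray*}
The heat-flux correction has therefore simply replaced $\int_{\mathcal{S}}\rho$ in \eqref{topo1} by $\int_{\mathcal{S}}(\rho-Q)$, and the entire argument reduces to controlling the sign of $\rho-Q$. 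This is where the dominant energy condition enters: for the present stress--energy the spatial momentum density has magnitude $|Q|$, so the DEC reads $\rho\geq|Q|$, forcing $\rho-Q\geq0$ pointwise. Combined with stability ($\lambda\geq0$) and $\mathbb{A}>0$, the right-hand side is non-negative, giving $\underline{\chi}(\mathcal{S})\geq0$; as $\underline{\chi}$ is an even integer for a closed orientable surface, $\mathcal{S}$ must be either a sphere or a torus.

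The main obstacle is thus not the inequality but its strictness, i.e.\ excluding the toroidal case $\underline{\chi}(\mathcal{S})=0$. When $Q<0$ this is immediate, since then $\rho-Q>0$ and the integral term alone is strictly positive (this is the easy regime). The delicate regime is $Q\geq0$ with the DEC (nearly) saturated, $\rho\equiv Q$, where the integral can vanish and one would need $\lambda>0$ strictly. To close this case I would show that the degenerate possibility $\lambda=0$ together with $\rho\equiv Q$ on $\mathcal{S}$ is incompatible with $\mathcal{S}$ lying in a genuine horizon, adapting the argument that the principal eigenvalue is non-zero: setting $\lambda=0$ forces $\hat{\chi}=0$, hence $\mathcal{C}=1$ via \eqref{x1}, so $u^{\mu}$ is normal to the horizon and every evolution equation degenerates to a constraint. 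I would then run the propagation equations for $\frac{2}{3}\Theta-\Sigma$ and $\phi$ — now carrying the extra heat-flux terms — to extract a relation that $\rho\equiv Q$ cannot satisfy without violating the NEC/WEC ($\rho+p>0$). Verifying that the additional $Q$-dependent terms in these constraints do not conspire to permit $\rho=Q$ is the one genuinely new computation, and is the step I expect to demand the most care.
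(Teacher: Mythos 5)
Your computation of the modified eigenvalue, \(\lambda=-\tfrac{2}{3}\rho-\mathcal{E}+Q\), and the resulting Gauss--Bonnet identity \(2\pi\underline{\chi}(\mathcal{S})=\lambda\mathbb{A}+\int_{\mathcal{S}}(\rho-Q)\) reproduce the paper's Eq.~\eqref{topo2} exactly, and your overall architecture (reduce everything to the sign of \(\rho-Q\), then exclude the borderline torus case by non-vanishing of \(\lambda\)) is the same as the paper's. The one substantive divergence is how \(\rho-Q\geq0\) is obtained. You assume the dominant energy condition, \(\rho\geq|Q|\). The paper explicitly states, immediately after its proof, that it does \emph{not} impose the DEC: instead it derives \(Q\leq\rho\) from stability itself, by noting that the denominator of the stability bound \eqref{proeq1} acquires the term \(-2Q\), that stability forces \(Q\geq\mathcal{E}\), and that feeding this into the area formula \eqref{proeq4} yields \(\mathbb{A}=\tfrac{3}{2(\rho-3Q)}\), whose positivity gives \(Q\leq\rho\); the DEC \eqref{deco} then comes out as a \emph{consequence} rather than a hypothesis. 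So as written your argument proves a strictly weaker statement than the theorem (sphericity of stable MOTS assuming the DEC); to match the paper you would need to replace the DEC step by the stability-plus-area argument. On the remaining point, your strategy for excluding \(\underline{\chi}(\mathcal{S})=0\) --- extending the ``principal eigenvalue is non-zero'' theorem to the imperfect-fluid constraints --- is exactly what the paper relies on (it remarks that this result ``extends to MOTS in any LRS II solution''), though you correctly flag that you have not carried out the \(Q\)-dependent computation; note also that strict positivity of \(\lambda\) alone already forces \(\underline{\chi}(\mathcal{S})>0\) once \(\rho-Q\geq0\), so you do not additionally need strictness of \(\rho-Q\) in the saturated case.
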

\begin{proof}
If \(Q\neq0\), then, \eqref{topo1} becomes

\begin{eqnarray}\label{topo2}
2\pi\underline{\chi}(\mathcal{S})=\lambda[\mathcal{S}]+\int_{\mathcal{S}}(\rho-Q).
\end{eqnarray}
It is then quite easy to show that \(Q\leq\rho\), so that if \(\mathcal{S}\) is stable, then, \(\underline{\chi}(\mathcal{S})>0\): it can be checked that the denominator of \eqref{proeq1} acquires the additional term \(-2Q\). It is not difficult to see that for S to be stable, it is necessary that \(Q\geq\mathcal{E}\). From \eqref{proeq4} we have the desired area bound:

\begin{eqnarray*}
\mathbb{A}=\frac{3}{2\left(\rho-3Q\right)},
\end{eqnarray*}
from which it follows that \(Q\leq\rho\).\qed
\end{proof}

Notice that in the above result, we do not impose the DEC. However, the DEC  in this case reduces to

\begin{eqnarray}\label{deco}
\rho\geq Q^2-\rho^2,
\end{eqnarray}
which is always satisfied for \(Q\leq\rho\). Furthermore, there is no sign fixing on the horizon and hence the above result would appear to  hold true for arbitrary signature. However, just as before, we can completely rule out the existence of stable MOTS in a TLM embedded in an imperfect LRS fluid, where the argument follows similarly as the perfect fluid case:

\begin{proposition}\label{prot2}
A TLM embedded in an imperfect fluid with vanishing anisotropy cannot contain a stable MOTS.
\end{proposition}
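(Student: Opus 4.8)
The plan is to mirror the proof of Proposition \ref{prot1} almost verbatim, exploiting the single structural feature that distinguishes the imperfect case from the perfect one: the heat flux \(Q\) shifts the \emph{stability} bound on \(\mathbb{A}\) but leaves the \emph{causal-character} (TLM) bound on \(\mathbb{A}\) untouched. Concretely, I would sandwich \(\mathbb{A}\) between a lower bound forced by the timelike signature and an upper bound forced by stability, and show the two are incompatible under the NEC.

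First I would pin down the TLM area lower bound in the imperfect setting with \(\Pi=0\). Since the membrane is timelike, \(\mathcal{C}<0\); the NEC gives \(\mathcal{L}_k\chi\le 0\), i.e. the numerator \(-(\rho+p)+2Q\) of \eqref{x2} is non-positive, so the denominator \(\mathcal{L}_l\chi\) must be strictly positive. The key point is that \(T_{\mu\nu}k^{\mu}l^{\nu}=\tfrac12(\rho-p-\Pi)\) carries no heat-flux contribution once \(\Pi=0\), so the expression \eqref{proeq3} for \(\mathcal{L}_l\chi\) retains its perfect-fluid form \(\rho-p-1/\mathbb{A}\). Hence \(\mathcal{L}_l\chi>0\) yields exactly \(\mathbb{A}>1/(\rho-p)\) with \(\rho-p>0\), the same bound as in the perfect-fluid case.

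Next I would invoke Theorem \ref{goto1}, and in particular the computation in its proof that the denominator of \eqref{proeq1} picks up the extra term \(-2Q\) (equivalently, that \(J_{\mu}e^{\mu}=-Q\) enters the stability operator), to record the stability upper bound \(\mathbb{A}<1/\bigl(2(\rho-Q)\bigr)\) with \(\rho-Q>0\). Combining the two, a stable MOTS in the TLM would require \(1/(\rho-p)<1/\bigl(2(\rho-Q)\bigr)\); since both denominators are positive, this is equivalent to \(\rho+p<2Q\). But the NEC applied to \(k^{\mu}\) gives \(T_{\mu\nu}k^{\mu}k^{\nu}=\tfrac12(\rho+p-2Q)\ge 0\), i.e. \(2Q\le\rho+p\), a contradiction. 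Therefore no stable MOTS can lie in the TLM.

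I expect the only genuine subtlety — the step I would verify most carefully — to be the claim that the TLM bound is truly \(Q\)-independent, i.e. that \(Q\) enters the stability operator (through \(J_{\mu}e^{\mu}\)) yet drops out of \(T_{\mu\nu}k^{\mu}l^{\nu}\) and hence of \(\mathcal{L}_l\chi\); this asymmetry between the two bounds is precisely what reproduces the perfect-fluid contradiction. A secondary point to confirm is the positivity of \(\rho-p\) and \(\rho-Q\), which is what allows the inequality \(1/(\rho-p)<1/\bigl(2(\rho-Q)\bigr)\) to be inverted without reversing its direction.
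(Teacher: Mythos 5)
Your proposal is correct and follows essentially the same route the paper intends: it explicitly states that the imperfect-fluid case "follows similarly as the perfect fluid case," and you have correctly fleshed this out by sandwiching the area between the \(Q\)-independent TLM lower bound \(\mathbb{A}>1/(\rho-p)\) and the \(Q\)-shifted stability upper bound \(\mathbb{A}\leq 1/\bigl(2(\rho-Q)\bigr)\), reaching the NEC contradiction \(\rho+p<2Q\). The only cosmetic remark is that chaining the inequalities as \(2(\rho-Q)\leq 1/\mathbb{A}<\rho-p\) sidesteps any need to verify the sign of \(\rho-Q\) before inverting.
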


We point out that equality holds for \(Q\leq\rho\) if an only if the horizon in which the MOTS lie is isolated, with the particular equation of state \(\rho=p\). For pressure-less imperfect fluids with no anisotropy, we see that for stable MOTS, away from isolation (more specifically, in a DH), \(Q\) is bounded  from both sides:

\begin{eqnarray*}
\mathcal{E}<Q<\rho.
\end{eqnarray*}
(For a DH we assume the infalling of radiation so that \(Q>0\). Using \eqref{x2})

\begin{eqnarray}\label{q1}
\mathcal{E}<Q<-6\mathcal{E}.
\end{eqnarray}
Observe that \(\mathcal{E}\) cannot be zero, and it is clear that we must have \(\mathcal{E}<0\). It therefore follows that on any MOTS contained in a DH in an LRS II spacetime with no anisotropy, and on which the pressure \(p\) vanishes and \(Q>0\), \(\mathcal{E}<0\). This is of course expected for otherwise, the horizon cannot be a DH, i.e. the denominator of \(\mathcal{C}\) is non-negative.

Indeed, if the sign of \(Q\) is known, then, \eqref{q1} provides us with a way to quickly check the (non)-existence of stable MOTS in a pressure-less scenario. For example, if for some constant \(\bar{c}\leq-6\) we have that \(Q=\bar{c}\mathcal{E}\) on a DH \(\mathcal{H}\), then, \(\mathcal{H}\) cannot contain a stable MOTS. For the case of vanishing acceleration, we can use the field equations and the commutator of the dot and hat derivatives to show that this condition is precisely

\begin{eqnarray*}
\hat{\mathcal{E}}\geq-\phi\left(\frac{9}{2}\mathcal{E}+\frac{2}{3}\Theta^2\right),
\end{eqnarray*}
where the number \(c\) is just (note that \(\phi\neq0\) on a DH, and \(\mathcal{E}\neq0\) as discussed above)

\begin{eqnarray*}
\bar{c}=\frac{1}{\phi\mathcal{E}}\left(2\hat{\mathcal{E}}+\phi\left(3\mathcal{E}+\frac{4}{3}\Theta^2\right)\right).
\end{eqnarray*}

We also point out that the ruling out of stable surfaces of other topologies in the perfect fluid case also holds here with \(Q\neq0\). In fact, we can make the following stronger claim: \textit{Horizons in LRS II solutions with no anisotropy, of the type considered in this work, cannot be foliated by non-spherical MOTS (even unstable ones).} To see this, we observe that for both the perfect and imperfect case with no anisotropy (without assuming any sign restriction on \(\lambda\)), the area \(\mathbb{A}\) blows up if \(\underline{\chi}(\mathcal{S})=0\) (an infinitely large black hole which we can rule out) and \(\mathbb{A}<0\) (which is clearly impossible) if \(\underline{\chi}(\mathcal{S})<0\). However, if we throw in anisotropy, then such foliation is in principle possible for \(\Pi>0\). This is due to the introduction of the \((1/2)\Pi\) term in \(\lambda\), a contribution that comes from the Gaussian curvature of the MOTS.

Let us end this section with the following remark:

\begin{remark}[Remark 2.]
\textit{While we have neither explicitly constructed MOTS nor seen explicit examples of them in LRS solutions with at least one of \(\xi\) or \(\Omega\) non-vanishing, to our understanding we are not able to rule out their existence. Therefore, one could attempt to generalize these results to LRS solutions with rotation and spatial twist if they do admit MOTS. However, there are two points to note:}

\begin{enumerate}
\item \textit{Firstly, rotation and spatial twist are obstructions to nonminimal MOTS, so that, for the MOTS we consider, if either one of the rotation or spatial twist is zero, the MOTS can only foliate minimal horizons as was earlier alluded to. The problem is, since the main focus of this is relating horizon area bound and stability, the minimal cases are not very much of interest here. Furthermore, for the eigenvalue, the only extra contribution from non LRS II quantities can only possibly be from the scalar curvature of the MOTS. In particular, this will be a factor of the square of the rotation scalar \(\Omega\), and so the stability of the minimal MOTS is determined by the difference between \(\mathcal{E}\) and the factor of the square of the rotation;}

\item \textit{Secondly, with respect to the topological considerations for this subsection, all the results applicable to the null case will apply if we include rotation and spatial twist, as the contributing term is encoded in \(\underline{\chi}(\mathcal{S})\).}
\end{enumerate} 
\end{remark}


\section{The growth of some covariant variables on horizons containing unstable MOTS}\label{bound}


Let us now look at some supplementary results where the magnitude of the function \(\mathcal{C}\) provides sign constraints on the propagation of the electric Weyl scalar and energy density. As a consequence, the following result will be demonstrated.

\begin{proposition}\label{thgr}
Let \(\mathcal{H}\) be a DH in a perfect fluid with \(\mathcal{C}>1\) and \(\Theta\geq0\), and let \(\mathcal{S}\) be a MOTS in \(\mathcal{H}\). Suppose the isotropic pressure is sufficiently small so that first order derivatives are negligible. Then, for small \(t>0\), the variation of the outgoing null expansion of \(\mathcal{S}=\mathcal{S}_0\) is less than that of \(\mathcal{S}_t\).
\end{proposition}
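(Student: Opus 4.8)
The plan is to reduce the comparison to the sign of a single directional derivative of the principal eigenvalue and then evaluate it from the 1+1+2 equations. Since $\chi(\Psi)=L_{\mathcal S}\Psi=\lambda\Psi$ with $\lambda=-\tfrac23\rho-\mathcal E$ (as in the proof of Proposition~\ref{prop1}) and $\Psi$ a strictly positive constant eigenfunction, the ``variation of the outgoing null expansion'' attached to each leaf $\mathcal S_t$ of the foliation of $\mathcal H$ (with $\mathcal S_0=\mathcal S$) is just $\lambda\,\Psi$ evaluated on $\mathcal S_t$. Hence the assertion $\chi(\Psi)|_{\mathcal S_0}<\chi(\Psi)|_{\mathcal S_t}$ for small $t>0$ is equivalent to $\lambda$ being strictly increasing along the family. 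I would first record the two sign facts the hypotheses supply: by \eqref{x2}, $\mathcal C>1$ on a perfect-fluid DH forces $\lambda<0$ (the MOTS is unstable, consistent with the section heading), while the negativity of the denominator of \eqref{x2} on a DH, together with the negligible pressure, forces $\mathcal E<0$.

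Next I would compute $\dot\lambda$. Using energy conservation $\dot\rho=-\Theta(\rho+p)$ and the LRS~II evolution equation for the electric Weyl scalar, $\dot{\mathcal E}=(\tfrac32\Sigma-\Theta)\mathcal E-\tfrac12(\rho+p)\Sigma$, and discarding the pressure-gradient and acceleration terms (legitimate under the stated smallness of $p$, since the projected Euler relation then gives $\mathcal A\approx0$), one obtains $\dot\lambda=(\rho+p)\!\left(\tfrac23\Theta+\tfrac12\Sigma\right)+\mathcal E\!\left(\Theta-\tfrac32\Sigma\right)$. The decisive simplification is the defining MOTS identity $\chi=0$, i.e.\ $\Sigma=\tfrac23\Theta+\phi$, which collapses the Weyl coefficient to $\Theta-\tfrac32\Sigma=-\tfrac32\phi$, so that $\dot\lambda=(\rho+p)\!\left(\tfrac23\Theta+\tfrac12\Sigma\right)-\tfrac32\phi\,\mathcal E$. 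Now the sign is immediate: $\rho+p>0$ (WEC, strict under the NEC), $\Theta\ge0$ by hypothesis, $\phi>0$ on the (non-minimal) DH, $\Sigma=\tfrac23\Theta+\phi>0$, and $\mathcal E<0$ as above; every term is non-negative and the last is strictly positive, whence $\dot\lambda>0$. Since $\Psi>0$, this yields $\tfrac{d}{dt}\chi(\Psi)>0$ along the leaves and therefore $\chi(\Psi)|_{\mathcal S_t}>\chi(\Psi)|_{\mathcal S_0}$ for small $t>0$.

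The main difficulty I anticipate is bookkeeping rather than conceptual. One must apply the small-pressure truncation consistently across the $\dot\rho$ and $\dot{\mathcal E}$ equations and confirm the exact coefficients of the $\mathcal E$ evolution equation, so that no surviving term spoils the clean cancellation $\Theta-\tfrac32\Sigma=-\tfrac32\phi$. More delicately, one must fix the orientation of $\{\mathcal S_t\}$ so that increasing $t$ is the $+\dot{}$ direction for which $\dot\lambda>0$; this is exactly where $\mathcal C>1$ enters, since if the family is traced along the full MOTT tangent $\mathcal V^{\mu}=k^{\mu}-\mathcal C l^{\mu}$ the relevant derivative is the $\mathcal C$-weighted combination $\tfrac{1}{\sqrt2}[(1-\mathcal C)\dot\lambda+(1+\mathcal C)\hat\lambda]$, and one then additionally needs the Bianchi-type propagation equation controlling $\hat\lambda=-\tfrac23\hat\rho-\hat{\mathcal E}$, for which the sign constraints on $\hat{\mathcal E}$ and $\hat\rho$ flagged at the start of this section would be invoked.
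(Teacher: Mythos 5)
There is a genuine gap, and it concerns the direction in which you differentiate. The deformation \(\mathcal{S}_t\) is generated by \(\partial/\partial t=\Psi e^{\mu}\), so the claim that the variation on \(\mathcal{S}_t\) exceeds that on \(\mathcal{S}_0\) is the statement that \(\lambda\) grows along \(e^{\mu}\), i.e. \(\hat{\lambda}=-\tfrac{2}{3}\hat{\rho}-\hat{\mathcal{E}}>0\); it is not controlled by \(\dot{\lambda}\). This is exactly how the paper argues: the chain Proposition \ref{prop20} \(\rightarrow\) Corollary \ref{cor1} \(\rightarrow\) Proposition \ref{prop23} \(\rightarrow\) Proposition \ref{prop22} establishes, under the smallness of \(p\) (which also kills \(\mathcal{A}\) via \(\hat{p}=-\mathcal{A}(\rho+p)\)), that \(\hat{\rho}<0\) \emph{and} the quantitative bound \(\hat{\rho}>3\phi\mathcal{E}\), and only the combination of these two facts with \eqref{fent2} yields the sign of \(\hat{\lambda}\). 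Your closing paragraph correctly senses that the hatted derivative is what is "additionally" needed, but you never carry that step out, and the bare sign information you propose to invoke (\(\hat{\rho}<0\), \(\hat{\mathcal{E}}>0\)) is insufficient on its own: it makes \(-\tfrac{2}{3}\hat{\rho}\) positive and \(-\hat{\mathcal{E}}\) negative, so the sign of their sum is undetermined without the sharper inequality \(\hat{\rho}>3\phi\mathcal{E}\).

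The computation you do perform is also not reliable as written. Your evolution equation \(\dot{\mathcal{E}}=(\tfrac{3}{2}\Sigma-\Theta)\mathcal{E}-\tfrac{1}{2}(\rho+p)\Sigma\) is the sign-reverse of the paper's \eqref{fent1}: subtracting \(\tfrac{1}{3}\dot{\rho}\) from your expression gives exactly the negative of the right-hand side of \eqref{fent1}. Using \eqref{fent1} together with \(\dot{\rho}=-\Theta(\rho+p)\) and the MOTS identity \(\Sigma-\tfrac{2}{3}\Theta=\phi\), one finds
\begin{eqnarray*}
\dot{\lambda}=\Theta\left(\rho+p\right)-\frac{1}{2}\phi\left(\rho+p\right)+\frac{3}{2}\phi\mathcal{E},
\end{eqnarray*}
in which the last two terms are strictly negative under your own hypotheses (\(\phi>0\), \(\mathcal{E}<0\), \(\rho+p>0\)), so \(\dot{\lambda}>0\) does not follow; the "every term is non-negative" conclusion is an artifact of the flipped sign. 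To repair the proof you should instead reproduce the paper's route: show \(\mathcal{L}_l\chi\) is increasing along \(e^{\mu}\) (Proposition \ref{prop23}), translate this into \(\hat{\rho}>3\phi\mathcal{E}\), combine with \(\hat{\rho}<0\) and \eqref{fent2} to control \(\hat{\mathcal{E}}\), and conclude \(-\tfrac{2}{3}\hat{\rho}-\hat{\mathcal{E}}>0\).
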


To prove the above result, we will go through a series of intermediate results - and accompanying discussions - which transparently demonstrates the interplay between the curvature quantities appearing in \(\lambda\) and their propagation along the direction along which the MOTS is varied, i.e., \(e^{\mu}\). (Notice that, if the pressure is sufficiently small relative to the local energy density, then, on a DH we must have \(\mathcal{E}<0\). Therefore, whenever the smallness of the pressure is assumed, it will be understood that \(\mathcal{E}\) is strictly negative.)

Before proceeding, we make a few observations. We observe that the following set of statements is true for a MOTS contained in a DH in any LRS II solution:

\begin{eqnarray*}
\begin{split}
0<\mathcal{C}<1&\implies\lambda>0\quad\mbox{(strictly stable)};\\
\mathcal{C}>1&\implies\lambda<0\quad\mbox{(unstable)}.
\end{split}
\end{eqnarray*}
Indeed, we know that \(\lambda\neq0\) on a MOTS contained in \(\mathcal{H}\) (\(\lambda=0\implies \mathcal{C}=1\)), and hence, while \(\lambda\) increases along \(e^{\mu}\), the sign stays fixed. This also provides a picture of the evolution of an unstable MOTS: if a MOTS is unstable, then, one can tell its future evolution - assuming all reasonable energy conditions are satisfied - by the sign of \(\mathcal{L}_l\chi\), i.e., whether it evolves to foliate a TLM or a DH.

Now, Proposition \ref{prot1} confirms a widely accepted view that only stable MOTS with globally spacelike world tubes (i.e., they foliate a DH) can be identified as black hole boundaries. The conditions on the magnitude, as it relates to stability of the MOTS we have considered in this work, have the following implication: On a MOTS, if the quantity \(\mathcal{C}\) takes a value in the interval \(0<\mathcal{C}<1\), then, that MOTS will evolve into a MOTT that is a suitable black hole boundary.

Let us now begin by taking the Lie derivative of \eqref{proeq4} along the tangent vector field \(\mathcal{V}^{\mu}\) to obtain

\begin{eqnarray*}
\frac{1}{3}\mathcal{L}_{\mathcal{V}}\rho-\mathcal{L}_{\mathcal{V}}\mathcal{E}=-\frac{1}{2\mathbb{A}^2}\mathcal{L}_{\mathcal{V}}\mathbb{A}.
\end{eqnarray*}
As is known, if the horizon is a TLM, the area is decreasing and hence \(\mathcal{L}_{\mathcal{V}}\mathbb{A}<0\). And if it is a DH, the area is non-decreasing and \(\mathcal{L}_{\mathcal{V}}\mathbb{A}\geq0\). We therefore have that

\begin{eqnarray*}
\frac{1}{3}\mathcal{L}_{\mathcal{V}}\rho-\mathcal{L}_{\mathcal{V}}\mathcal{E}>(resp. \leq)\ 0,
\end{eqnarray*}
for a TLM (\textit{resp.} DH). Explicitly, the above expression becomes

\begin{eqnarray*}
\left(1-\mathcal{C}\right)\left(\frac{1}{3}\dot{\rho}-\dot{\mathcal{E}}\right)+\left(1+\mathcal{C}\right)\left(\frac{1}{3}\hat{\rho}-\hat{\mathcal{E}}\right)>(resp. \leq)\ 0,
\end{eqnarray*}
which, on the horizon, simplifies to

\begin{eqnarray*}
\phi\left(\frac{1}{2}\left(1-\mathcal{C}\right)\left(\rho+p\right)+3\mathcal{C}\mathcal{E}\right)>(resp. \leq)\ 0,
\end{eqnarray*}
where we have used the evolution and propagation equations \cite{cc1}

\begin{subequations}
\begin{align}
\dot{\mathcal{E}}-\frac{1}{3}\dot{\rho}&=-\left(\frac{2}{3}\Theta-\Sigma\right)\left(\frac{1}{2}\left(\rho+p\right)-\frac{3}{2}\mathcal{E}\right),\label{fent1}\\
\hat{\mathcal{E}}-\frac{1}{3}\hat{\rho}&=-\frac{3}{2}\phi\mathcal{E}.\label{fent2}
\end{align}
\end{subequations}
Away from minimality, we know that \(\phi\) is strictly positive, and hence we have the following

\begin{eqnarray}\label{ed}
\left(1-\mathcal{C}\right)\left(\rho+p\right)>(resp. \leq)\ -6\mathcal{C}\mathcal{E}
\end{eqnarray}

Now, take the case of a DH \(\mathcal{H}\) and suppose that the Weyl scalar is strictly negative. Then, clearly for \(\mathcal{E}<0\), \eqref{ed} is always true for \(\mathcal{C}>1\) (note that \(\rho+p>0\) is always true since \(\mathcal{H}\) is non-null and, \(\rho\geq0\implies\mbox{DEC}\implies\mbox{NEC}\implies\rho+p>0\)). However, for \(0<\mathcal{C}<1\), \eqref{ed} is a constraint.

As an aside, a point of consideration, where the subtlety of the above discussion is crucial, is the case of a slowly evolving horizon \cite{ib6,ib7}, usually defined in terms of the smallness of a so-called \textit{slowly evolving parameter} \(\epsilon\). It has been suggested that \cite{shef2}, at least for the class of locally rotaionally symmetric solutions, slowly evolving horizons can simply be interpreted as those for which \(|\mathcal{C}|\) is sufficiently small (in fact, this is easily seen to lead to an equivalent interpretation of slowly evolving horizons: those for which the magnitude of the principal eigenvalue of the stability operator is sufficiently small). In this case, one sees from \eqref{ed} that the WEC must be violated in this limit, which is clearly not possible for the cases considered here. This would therefore suggest the following: \textit{A DH \(\mathcal{H}\), embedded in a perfect fluid and containing a stable MOTS (the ``stable MOTS" requirement can be relaxed to the requirement that the Weyl scalar is strictly negative) cannot slowly evolve for a ``sufficiently large" \(\rho\) on \(\mathcal{H}\).} 

The necessity of the qualification that the energy density be large enough is easily seen from \eqref{ed} as the inequality would hold if \(\rho\) is small enough. Take as an example the LTB solution, which is pressureless with its metric in local coordinates given as 

\begin{eqnarray}\label{lt1}
ds^2=-d\bar{\tau}^2+\frac{\mathcal{R}'^2}{1-\frac{2\mathcal{M}}{\bar{r}}}d\bar{r}^2+\mathcal{R}^2dS^2,
\end{eqnarray}
with the prime denoting differentiation with respect to the radial coordinate \(\bar{r}\), where \(\mathcal{M}=\mathcal{M}(\bar{r})\) corresponds to the Misner-Sharp mass, \(\mathcal{R}=\mathcal{R}(\bar{\tau},\bar{r})\), and \(dS^2\) is the spatial 2-surfaces metric. The coordinate expression of the energy density is given as

\begin{eqnarray}\label{lt2}
\rho=\frac{\left(\bar{r}^3\mathcal{M}\right)'}{\mathcal{R}^2\mathcal{R}'}.
\end{eqnarray}
It is clear that, for a fixed \(\bar{r}\) if the change in the mass \(\mathcal{M}\) is sufficiently small, so is \(\rho\), which implies that \(\mathcal{C}\) is also very small. Of course, for sufficiently small \(\mathcal{C}\), \(\rho\) is sufficiently small, which implies \(\mathcal{M}'\) is very small. 

MOTS in these horizons are stable and the horizon always slowly evolves. Detailed considerations of slowly evolving horizons in the LTB solution can be found in \cite{ib7}, where for the cases considered on a DH, the slowly evolving parameter takes the form

\begin{eqnarray*}
\epsilon^2=8\frac{\mathcal{M}'}{\mathcal{R}'-\mathcal{M}'},
\end{eqnarray*}
from which it is clear that \(\mathcal{M}'\) being sufficiently small implies \(\epsilon\) is sufficiently small.

Note that if the stability requirement for the MOTS is relaxed, then in general the sign of \(\mathcal{E}\) is not constrained. However, if \(0<\mathcal{C}<1\), then \(\mathcal{E}\) has to be strictly negative (not considering the slowly evolving case).

The next set of results examines the ``growth" of the Weyl curvature and local energy variables. Let us begin with the following result.

\begin{proposition}\label{prop20}
Suppose \(\mathcal{H}\) is DH embedded in a perfect fluid, with \(\mathcal{C}>1,\mathcal{A}\geq0\) and \(\Theta\geq0\). Then, at least one of the following is true:

\begin{enumerate}
\item The isotropic pressure decreases along \(u^{\mu}\); or

\item The local energy density decreases along \(e^{\mu}\).
\end{enumerate}
\end{proposition}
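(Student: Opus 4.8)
The plan is to argue by contradiction. Suppose neither alternative holds, so that on \(\mathcal{S}\) we have simultaneously \(\dot{p}\geq 0\) and \(\hat{\rho}\geq 0\), and seek a contradiction with the hypotheses \(\mathcal{C}>1\), \(\mathcal{A}\geq 0\), \(\Theta\geq 0\). The only matter information the hypotheses supply directly comes from the perfect-fluid conservation laws of the \(1+1+2\) system \cite{cc1}: the energy equation \(\dot{\rho}=-\Theta(\rho+p)\) and the momentum (Euler) equation \(\hat{p}=-\mathcal{A}(\rho+p)\). Since \(\rho+p>0\) on a non-null horizon (the NEC, as noted after \eqref{ed}), \(\Theta\geq 0\) forces \(\dot{\rho}\leq 0\) and \(\mathcal{A}\geq 0\) forces \(\hat{p}\leq 0\). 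Crucially, the two derivatives named in the conclusion, \(\dot{p}\) and \(\hat{\rho}\), are precisely the two that conservation leaves undetermined, so the entire task is to manufacture a single on-horizon relation that couples them.

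I would first collect the sign data carried by \(\mathcal{C}>1\). Recall \(\lambda=\hat{\chi}=-\tfrac{2}{3}\rho-\mathcal{E}\), and that \(\mathcal{C}>1\) gives \(\lambda<0\), hence \(\mathcal{E}>-\tfrac{2}{3}\rho\); combined with the dynamical-horizon monotonicity inequality \eqref{ed}, namely \((1-\mathcal{C})(\rho+p)\leq -6\mathcal{C}\mathcal{E}\), and with the area expression \eqref{proeq4}, this fixes the admissible range of \(\mathcal{E}\) on \(\mathcal{S}\). Next I would trade the target derivative \(\hat{\rho}\) for Weyl data using the propagation equation \eqref{fent2}, \(\hat{\mathcal{E}}-\tfrac{1}{3}\hat{\rho}=-\tfrac{3}{2}\phi\mathcal{E}\), i.e. \(\hat{\rho}=3\hat{\mathcal{E}}+\tfrac{9}{2}\phi\mathcal{E}\), using the MOTS condition \(\chi=0\) (equivalently \(\tfrac{2}{3}\Theta-\Sigma=-\phi\)) and \(\phi>0\) on a non-minimal DH to evaluate \eqref{fent1} and read off \(\dot{\mathcal{E}}\) in terms of the now-known \(\dot{\rho}\).

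The core of the argument is to bridge \(\dot{p}\) and \(\hat{\rho}\) through the commutator \(\hat{\dot{\psi}}-\dot{\hat{\psi}}=-\mathcal{A}\dot{\psi}+(\tfrac{1}{3}\Theta+\Sigma)\hat{\psi}\). Applied to \(\rho\) and combined with \(\dot{\rho}=-\Theta(\rho+p)\) and \(\hat{p}=-\mathcal{A}(\rho+p)\), the acceleration contributions cancel and one obtains an identity governing \(\hat{\rho}\); the time derivative of the momentum equation, fed through the same commutator, produces the companion identity carrying \(\dot{p}\). I would then evaluate these on the tube, weighted by \(\mathcal{V}^{\mu}=\tfrac{1}{\sqrt{2}}[(1-\mathcal{C})u^{\mu}+(1+\mathcal{C})e^{\mu}]\) and subject to \(\mathcal{L}_{\mathcal{V}}\chi=0\), and combine with \eqref{fent1}--\eqref{fent2}. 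The aim is to collapse everything to a single relation of the schematic form \(a\,\dot{p}+b\,\hat{\rho}=c\), in which \(a,b\geq 0\) are built from \(\mathcal{A},\Theta,\phi,\rho+p\), while the sign of \(c\) is forced to be strictly negative by \(\mathcal{C}>1\) (through \(\lambda<0\)). Under the contradiction hypothesis the left side is \(\geq 0\) and the right side \(<0\), which is impossible; hence at least one of \(\dot{p}<0\), \(\hat{\rho}<0\) holds.

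The step I expect to be the main obstacle is the closure of this coupling. Since \(\dot{p}\) and \(\hat{\rho}\) are not fixed algebraically by the field equations, feeding conservation through the commutator and through the differentiated momentum and Weyl equations inevitably generates the auxiliary derivatives \(\hat{\Theta}\), \(\dot{\mathcal{A}}\) and \(\hat{\phi}\), which must be eliminated using the remaining LRS II propagation equations \cite{cc1} and the on-horizon identities (\(\tfrac{2}{3}\Theta-\Sigma=-\phi\) and \(\mathcal{L}_{\mathcal{V}}\chi=0\)) without leaving a residual term of uncontrolled sign. Arranging that \(a\) and \(b\) emerge non-negative and that \(c\) is strictly negative while using only \(\mathcal{C}>1\), \(\mathcal{A}\geq 0\) and \(\Theta\geq 0\) --- and not the stronger ``negligible first-order pressure derivatives'' assumption reserved for Proposition \ref{thgr} --- is the delicate bookkeeping on which the statement rests.
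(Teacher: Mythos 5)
Your proposal correctly identifies the shape of the target --- a single on-horizon inequality $a\,\dot p+b\,\hat\rho\le c$ with $a,b\ge0$ and $c\le 0$ --- but it never actually produces that inequality, and the mechanism you propose for producing it is not one that closes. Feeding the commutator $\hat{\dot\psi}-\dot{\hat\psi}=-\mathcal{A}\dot\psi+(\tfrac13\Theta+\Sigma)\hat\psi$ with $\psi=\rho$, together with the time-differentiated Euler equation, introduces the mixed second derivatives $\hat{\dot\rho}$, $\dot{\hat\rho}$, $\dot{\hat p}$, which are not controlled by the hypotheses $\mathcal{C}>1$, $\mathcal{A}\ge0$, $\Theta\ge0$ and cannot be eliminated using only $\mathcal{L}_{\mathcal{V}}\chi=0$ and the first-order LRS II propagation equations; you flag this closure problem yourself as ``the main obstacle,'' and it is in fact fatal to the route as proposed. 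You also misplace the source of the negativity of $c$: in the correct argument it comes from $\Theta\ge0$, $\mathcal{A}\ge0$, $\mathcal{C}>1$ together with the area increase law $\mathcal{L}_{\mathcal{V}}\mathbb{A}\ge0$ of a DH, not from $\lambda<0$.

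The paper's derivation is entirely first order and bypasses the commutator. Since $\mathcal{C}=\mathcal{L}_k\chi/\mathcal{L}_l\chi$ is constant on the tube, one has $(\mathcal{L}_{\mathcal{V}}\mathcal{L}_l\chi)(\mathcal{L}_k\chi)=(\mathcal{L}_{\mathcal{V}}\mathcal{L}_k\chi)(\mathcal{L}_l\chi)$; the NEC (Raychaudhuri) gives $\mathcal{L}_{\mathcal{V}}\mathcal{L}_k\chi<0$, and on a DH $\mathcal{L}_k\chi<0$ and $\mathcal{L}_l\chi<0$, whence $\mathcal{L}_{\mathcal{V}}\mathcal{L}_l\chi\le0$, which is \eqref{fir1}. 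The ingredient your plan is missing is the identity \eqref{proeq3}, $\mathcal{L}_l\chi=\rho-p-1/\mathbb{A}$: applying $\mathcal{L}_{\mathcal{V}}\propto(1-\mathcal{C})\,\dot{}\,+(1+\mathcal{C})\,\hat{}\,$ to it, substituting the two conservation laws $\dot\rho=-\Theta(\rho+p)$ and $\hat p=-\mathcal{A}(\rho+p)$ (exactly the equations you correctly single out as the only direct matter input), and discarding the nonnegative term $\mathbb{A}^{-2}\mathcal{L}_{\mathcal{V}}\mathbb{A}$ yields \eqref{fir2}, i.e. $[(\mathcal{C}-1)\Theta+(1+\mathcal{C})\mathcal{A}](\rho+p)+(1+\mathcal{C})\hat\rho+(\mathcal{C}-1)\dot p\le0$, from which the conclusion is immediate since for $\mathcal{C}>1$ the coefficients of $\hat\rho$ and $\dot p$ are positive and the first bracket is nonnegative. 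The detours you sketch through $\lambda=\hat\chi$, \eqref{ed}, \eqref{proeq4} and the Weyl propagation \eqref{fent1}--\eqref{fent2} are not needed here (they belong to Corollary \ref{cor1} and Propositions \ref{prop22}--\ref{prop23}).
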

\begin{proof}
Taking the Lie derivative of \eqref{x1} along \(\mathcal{V}^{\mu}\), and noting that \(\mathcal{C}\) is constant, we obtain

\begin{eqnarray*}
\left(\mathcal{L}_{\mathcal{V}}\mathcal{L}_l\chi\right)\left(\mathcal{L}_k\chi\right)=\left(\mathcal{L}_{\mathcal{V}}\mathcal{L}_k\chi\right)\left(\mathcal{L}_l\chi\right).
\end{eqnarray*} 
Since the DEC is always implicit as it is assumed that \(\rho\geq0\), the NEC always holds so that \(\mathcal{L}_{\mathcal{V}}\mathcal{L}_k\chi<0\). And since \(\mathcal{L}_k\chi<0\) and \(\mathcal{L}_l\chi<0\), we have that

\begin{eqnarray}\label{fir1}
\mathcal{L}_{\mathcal{V}}\mathcal{L}_l\chi\leq0.
\end{eqnarray} 
Using \(\mathcal{L}_l\chi=\rho-p-1/\mathbb{A}\), the above inequality can be explicitly written as (where we again use the fact that \(\mathcal{L}_{\mathcal{V}}\mathbb{A}\geq0\))

\begin{eqnarray}\label{fir2}
\left[\left(\mathcal{C}-1\right)\Theta+\left(1+\mathcal{C}\right)\mathcal{A}\right]\left(\rho+p\right)+\left(1+\mathcal{C}\right)\hat{\rho}+\left(\mathcal{C}-1\right)\dot{p}\leq0.
\end{eqnarray} 
Indeed, since \(\mathcal{A}\geq0\) and \(\Theta\geq0\), for \(\mathcal{C}>1\) it follows that \(\hat{\rho}\) and \(\dot{p}\) cannot simultaneously be positive.\qed
\end{proof}

The following corollary is a consequence of Proposition \ref{prop20}.

\begin{corollary}\label{cor1}
Let \(\mathcal{H}\) be a DH with \(\mathcal{C}>1\), \(\mathcal{A}\geq0\) and \(\Theta\geq0\), embedded in a perfect fluid. If \(\mathcal{L}_{l}\chi\) is monotone along \(e^{\mu}\) and the Weyl scalar is strictly negative on \(\mathcal{H}\), then, the local energy density must decrease along \(e^{\mu}\).
\end{corollary}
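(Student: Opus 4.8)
The plan is to read the corollary as a sharpening of Proposition \ref{prop20}. That proposition already gives, through the inequality \eqref{fir2}, that under \(\mathcal{C}>1,\ \mathcal{A}\geq0,\ \Theta\geq0\) the quantities \(\hat{\rho}\) and \(\dot{p}\) cannot both be positive; in particular, if \(\hat{\rho}>0\) then necessarily \(\dot{p}\leq0\). Hence it suffices to \emph{exclude} the branch \(\hat{\rho}>0\), after which \(\hat{\rho}\leq0\) (``\(\rho\) decreases along \(e^{\mu}\)'') is immediate. The two extra inputs available here are the strict negativity \(\mathcal{E}<0\) and the hypothesis that \(\mathcal{L}_l\chi\) is monotone along \(e^{\mu}\), i.e. that the hat derivative \(\widehat{\mathcal{L}_l\chi}\) keeps a fixed sign. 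My strategy is to convert this monotonicity into a sign statement about \(\hat{\rho}\) and then collide it with the dichotomy coming from Proposition \ref{prop20}.

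The key computation is to evaluate \(\widehat{\mathcal{L}_l\chi}\) on the horizon. Starting from the horizon form \(\mathcal{L}_l\chi=\tfrac13\rho-p+2\mathcal{E}\) (the denominator of \eqref{x2}, consistent with \eqref{proeq4}), applying the propagation equation \eqref{fent2} written as \(\hat{\mathcal{E}}=\tfrac13\hat{\rho}-\tfrac32\phi\mathcal{E}\), and using the perfect-fluid momentum constraint \(\hat{p}=-(\rho+p)\mathcal{A}\) (the same Euler relation implicit in the passage from \eqref{fir1} to \eqref{fir2}), one obtains the identity
\[
\widehat{\mathcal{L}_l\chi}=\hat{\rho}+(\rho+p)\mathcal{A}-3\phi\mathcal{E}.
\]
Now \(\rho+p>0\) by the NEC (which holds on a non-null horizon with \(\rho\geq0\)), \(\mathcal{A}\geq0\) by hypothesis, and \(\phi>0\) away from minimality on a DH, while \(\mathcal{E}<0\); thus \((\rho+p)\mathcal{A}\geq0\) and \(-3\phi\mathcal{E}>0\). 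Consequently, on the undesired branch \(\hat{\rho}>0\) every term on the right is positive, forcing \(\widehat{\mathcal{L}_l\chi}>0\), i.e. \(\mathcal{L}_l\chi\) strictly \emph{increasing} along \(e^{\mu}\).

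It then remains to show that the monotone behaviour of \(\mathcal{L}_l\chi\) must in fact be non-increasing, \(\widehat{\mathcal{L}_l\chi}\leq0\); this contradicts the strictly increasing behaviour just forced by \(\hat{\rho}>0\), excludes that branch, and—reading off the displayed identity again—even sharpens the conclusion to \(\hat{\rho}\leq-(\rho+p)\mathcal{A}+3\phi\mathcal{E}<0\). I expect this last step to be the main obstacle, because \(\mathcal{L}_l\chi<0\) throughout the DH is a priori compatible with either direction of monotonicity, so the sign is not automatic from the identity alone. To pin it down I would feed the monotonicity into \eqref{fir1}, \(\mathcal{L}_{\mathcal{V}}\mathcal{L}_l\chi\leq0\): writing \(\mathcal{V}^{\mu}=\tfrac{1}{\sqrt{2}}\bigl[(1-\mathcal{C})u^{\mu}+(1+\mathcal{C})e^{\mu}\bigr]\) and using \(\mathcal{C}>1\), this relation couples the signs of \(\widehat{\mathcal{L}_l\chi}\) and of the \(u^{\mu}\)-derivative of \(\mathcal{L}_l\chi\), and together with the sustained negativity of \(\mathcal{L}_l\chi\) on the growing horizon it should rule out the increasing branch. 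If instead ``monotone'' is understood as non-increasing in the relevant outward sense, the argument closes at once from the identity above; in either case the entire subtlety is the determination of that single sign, while the remaining manipulations (substitution of \eqref{fent2} and the Euler relation) are routine.
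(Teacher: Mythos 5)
Your identity \(\widehat{\mathcal{L}_l\chi}=\hat{\rho}+\mathcal{A}\left(\rho+p\right)-3\phi\mathcal{E}\) is correct (it follows from \eqref{fent2}, \(\hat{p}=-\mathcal{A}(\rho+p)\), and \(\tfrac{2}{3}\Theta-\Sigma=-\phi\) on the MOTS), and your contrapositive structure --- if \(\hat{\rho}>0\) then every term on the right is positive, so \(\mathcal{L}_l\chi\) is strictly increasing along \(e^{\mu}\) --- is sound and genuinely different from the paper's route. The paper instead shows \(\dot{p}>0\) (by arguing via \eqref{fir10} that the dot- and hat-derivatives of \(\mathcal{L}_l\chi\) share a sign, taking the hat-derivative to be negative, and unpacking the dot-derivative with \eqref{fent1}) and then invokes the dichotomy of Proposition \ref{prop20} to force \(\hat{\rho}<0\). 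Your version bypasses Proposition \ref{prop20} entirely and, once the sign \(\widehat{\mathcal{L}_l\chi}\leq0\) is granted, yields the sharper quantitative bound \(\hat{\rho}\leq3\phi\mathcal{E}-\mathcal{A}\left(\rho+p\right)<0\).

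The gap is exactly where you place it, and the repair you sketch does not work. The inequality \eqref{fir1}, written as \(\left(1-\mathcal{C}\right)\left(\mathcal{L}_l\chi\right)^{\cdot}+\left(1+\mathcal{C}\right)\widehat{\mathcal{L}_l\chi}\leq0\) with \(\mathcal{C}>1\), is equivalent to \(\left(\mathcal{C}-1\right)\left(\mathcal{L}_l\chi\right)^{\cdot}\geq\left(\mathcal{C}+1\right)\widehat{\mathcal{L}_l\chi}\); this is perfectly compatible with \(\widehat{\mathcal{L}_l\chi}>0\) (it then merely forces \(\left(\mathcal{L}_l\chi\right)^{\cdot}>0\) as well), so it couples the two signs but cannot select the negative one. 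The persistent negativity of \(\mathcal{L}_l\chi\) on a DH does not help either, since a negative function can be increasing. The increasing branch is therefore excluded only by reading the hypothesis ``monotone along \(e^{\mu}\)'' as fixing \(\widehat{\mathcal{L}_l\chi}\leq0\) --- which is precisely what the paper does when it asserts that the denominator of \eqref{fir10} is negative. Under that reading your argument closes at once (your own fallback) and is the cleaner of the two; without it, neither your proposal nor, strictly speaking, the paper's own proof determines the crucial sign.
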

\begin{proof}
We prove this corollary by showing that, under the assumptions, \(\dot{p}\) is strictly positive. We expand \eqref{fir1} as

\begin{eqnarray*}
\left(1-\mathcal{C}\right)\left(\dot{\chi}-\hat{\chi}\right)^{\cdot}+\left(1+\mathcal{C}\right)\widehat{\left(\dot{\chi}-\hat{\chi}\right)}\leq0,
\end{eqnarray*}
from which, for \(\mathcal{C}>1\), we have

\begin{eqnarray}\label{fir10}
0<\frac{\mathcal{C}+1}{\mathcal{C}-1}\leq\frac{\left(\dot{\chi}-\hat{\chi}\right)^{\cdot}}{\widehat{\left(\dot{\chi}-\hat{\chi}\right)}}
\end{eqnarray}
Clearly, the numerator and denominator of \eqref{fir10} must have the same sign. And since \(\mathcal{L}_{l}\chi\) is monotone along \(e^{\mu}\), the denominator is negative \(\implies\) numerator must also be negative, and therefore we have that

\begin{eqnarray*}
\dot{p}>-3\phi\mathcal{E}+\left(\Theta+\phi\right)\left(\rho+p\right),
\end{eqnarray*}
where we have used the evolution equation \eqref{fent1}. Since \(\mathcal{E}\) is strictly negative, the right hand side of the above expression is positive, from which it follows that \(\dot{p}>0\). Hence, by Proposition \ref{prop20}, \(\hat{\rho}\) must be negative.\qed
\end{proof}

\begin{proposition}\label{prop22}
Let \(\mathcal{H}\) be a DH embedded in a perfect fluid with \(\mathcal{C}>1\) and \(\Theta\geq0\), and suppose the Weyl scalar is strictly negative. If the pressure is sufficiently small so that first order derivatives are negligible, then the Weyl distortion increases along \(e^{\mu}\).
\end{proposition}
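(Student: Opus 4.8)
The plan is to read off the sign of \(\hat{\mathcal E}\) (the derivative of the electric Weyl scalar along \(e^\mu\), which is what I take ``the Weyl distortion increases along \(e^\mu\)'' to mean, namely \(\hat{\mathcal E}>0\)) from two relations. The first is the exact propagation equation \eqref{fent2}, \(\hat{\mathcal E}=\tfrac13\hat\rho-\tfrac32\phi\mathcal E\), which I would keep unchanged since it carries no pressure terms and so is untouched by the small-pressure truncation. Its inhomogeneous term \(-\tfrac32\phi\mathcal E\) is strictly positive here, because \(\mathcal E<0\) by hypothesis and \(\phi>0\) away from minimality; hence \(\hat{\mathcal E}\) and \(\hat\rho\) cannot both be fixed from this single equation, and the whole difficulty is to control \(\hat\rho\).

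The second relation I would extract from the constancy of \(\mathcal C\). Writing \(\mathcal C=\mathcal{L}_k\chi/\mathcal{L}_l\chi\) with \(\mathcal{L}_k\chi=-(\rho+p)\) and \(\mathcal{L}_l\chi=\tfrac13\rho-p+2\mathcal E\) from \eqref{x2}, and imposing that \(\mathcal C\) does not vary in the \(e^\mu\)-direction, i.e. \(\hat{\mathcal C}=0\), I would differentiate the quotient and discard \(\hat p\) and \(p\) under the small-pressure assumption. In the cross-multiplied numerator the \(\rho\,\hat\rho\) terms cancel, leaving the clean proportionality \(\hat{\mathcal E}=(\mathcal E/\rho)\hat\rho\). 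Since \(\mathcal E<0\) and \(\rho>0\), the constant \(\mathcal E/\rho\) is strictly negative, so it remains only to fix the sign of \(\hat\rho\).

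To sign \(\hat\rho\), I would substitute \(\hat{\mathcal E}=(\mathcal E/\rho)\hat\rho\) back into \eqref{fent2} and solve, obtaining \(\hat\rho=-\tfrac92\phi\mathcal E\rho/(3\mathcal E-\rho)\); with \(\phi>0\), \(\mathcal E<0\), \(\rho>0\) the numerator is positive and the denominator \(3\mathcal E-\rho\) is negative, so \(\hat\rho<0\) and therefore \(\hat{\mathcal E}=(\mathcal E/\rho)\hat\rho>0\). Equivalently, the sign \(\hat\rho<0\) is delivered independently by Proposition \ref{prop20}: the small-pressure hypothesis makes \(\dot p\) negligible and, through the momentum conservation \(\hat p=-(\rho+p)\mathcal A\) used in \eqref{fir2}, forces \(\mathcal A\approx0\), so that with \(\Theta\geq0\) and \(\mathcal C>1\) the inequality \eqref{fir2} collapses to \((1+\mathcal C)\hat\rho\leq-(\mathcal C-1)\Theta(\rho+p)\leq0\). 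This presumably explains why \(\Theta\geq0\) and \(\mathcal C>1\) are retained in the hypotheses, even though the explicit solution already yields \(\hat\rho<0\). Either way the conclusion \(\hat{\mathcal E}>0\) follows, and this is precisely the ingredient that, together with \(\hat\rho<0\) and \(\lambda=-\tfrac23\rho-\mathcal E<0\) for \(\mathcal C>1\), will later give \(\hat\lambda=-\tfrac23\hat\rho-\hat{\mathcal E}>0\) for Proposition \ref{thgr}.

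I expect the main obstacle to be justifying the second relation rather than the algebra: one must argue that \(\mathcal C\) may be treated as genuinely constant in the \(e^\mu\)-direction (the earlier arguments only explicitly invoke \(\mathcal{L}_{\mathcal V}\mathcal C=0\) along the tube), and that dropping \(p,\hat p,\dot p\) while retaining \(\hat\rho\) and \(\hat{\mathcal E}\) is self-consistent. A careful write-up should check that this truncation is compatible with \eqref{fent2} and \eqref{fir2} simultaneously, and should record the nondegeneracy it relies on — \(\phi\neq0\) (away from minimality) and \(\rho>0\) — so that both the proportionality constant \(\mathcal E/\rho\) and the denominator \(3\mathcal E-\rho\) are nonzero with definite sign.
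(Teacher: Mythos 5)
Your overall target is right (obtain \(\hat{\mathcal E}>0\) from \eqref{fent2} plus control of \(\hat\rho\)), and your fallback derivation of \(\hat\rho<0\) from Proposition \ref{prop20} --- with \(\mathcal A\approx 0\) forced by \(\hat p=-(\rho+p)\mathcal A\) so that \eqref{fir2} collapses to \((1+\mathcal C)\hat\rho\leq-(\mathcal C-1)\Theta(\rho+p)\leq 0\) --- is exactly the paper's first ingredient. But the load-bearing step of your argument, the proportionality \(\hat{\mathcal E}=(\mathcal E/\rho)\hat\rho\), rests on \(\hat{\mathcal C}=0\), and this is not available. The paper assumes \(\mathcal C\) is constant on each MOTS (a \(2\)-surface orthogonal to both \(u^{\mu}\) and \(e^{\mu}\)) and, in the proof of Proposition \ref{prop20}, that \(\mathcal C\) is constant along the tube tangent \(\mathcal V^{\mu}=k^{\mu}-\mathcal C l^{\mu}\); since \(e^{\mu}\) has a component along the tube normal \(\tilde{\mathcal V}^{\mu}\), the derivative \(\hat{\mathcal C}\) involves the normal derivative of \(\mathcal C\), which is neither assumed nor controlled. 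Indeed, with \(\mathcal L_k\chi<0\), \(\mathcal L_l\chi<0\), \(\widehat{\mathcal L_k\chi}=-\hat\rho>0\) and \(\widehat{\mathcal L_l\chi}>0\), the numerator of \(\hat{\mathcal C}\) has no definite sign. You flag this yourself as ``the main obstacle,'' but the ``either way'' at the end does not close it: knowing only \(\hat\rho<0\), equation \eqref{fent2} gives \(\hat{\mathcal E}=\tfrac13\hat\rho-\tfrac32\phi\mathcal E\) as the sum of a negative and a positive term, so the sign of \(\hat{\mathcal E}\) remains indeterminate without a lower bound on \(\hat\rho\).

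The paper supplies that lower bound by a different route: Proposition \ref{prop23} shows (by contradiction, using Corollary \ref{cor1} and the sign-matching forced by \eqref{fir10}) that \(\mathcal L_l\chi\) increases along \(e^{\mu}\), which is precisely \(\hat\rho>3\phi\mathcal E\) (equation \eqref{df}). Rewriting \eqref{fent2} as \(\hat{\mathcal E}=-\tfrac16\hat\rho+\tfrac12\left(\hat\rho-3\phi\mathcal E\right)\) then exhibits \(\hat{\mathcal E}\) as a sum of two strictly positive terms, using \(\hat\rho<0\) for the first and \(\hat\rho>3\phi\mathcal E\) for the second. If you replace your \(\hat{\mathcal C}=0\) step with this two-sided pinch \(3\phi\mathcal E<\hat\rho<0\), your argument goes through and coincides with the paper's.
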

Proposition \ref{prop22} follows as a corollary to the following proposition.

\begin{proposition}\label{prop23}
Let \(\mathcal{H}\) be a DH embedded in a perfect fluid with \(\mathcal{C}>1\) and \(\Theta\geq0\), and suppose the Weyl scalar is strictly negative. If the isotropic pressure is sufficiently small so that first order derivatives are negligible, then \(\mathcal{L}_{l}\chi\) is increasing along \(e^{\mu}\).
\end{proposition}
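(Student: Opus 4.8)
The plan is to convert the statement into a sign condition on covariant scalars by differentiating the closed form of \(\mathcal{L}_l\chi\) along \(e^{\mu}\). First I would record that, for a perfect fluid, the denominator of \eqref{x2} identifies \(\mathcal{L}_l\chi\) with the combination \(\tfrac{1}{3}\left(\rho-3p\right)+2\mathcal{E}\) (equivalently \(\rho-p-1/\mathbb{A}\) via \eqref{proeq4}). Applying the hat derivative and eliminating \(\hat{\mathcal{E}}\) through the propagation equation \eqref{fent2}, \(\hat{\mathcal{E}}-\tfrac{1}{3}\hat{\rho}=-\tfrac{3}{2}\phi\mathcal{E}\), collapses the result to
\begin{eqnarray*}
\widehat{\mathcal{L}_l\chi}=\frac{1}{3}\hat{\rho}-\hat{p}+2\hat{\mathcal{E}}=\hat{\rho}-\hat{p}-3\phi\mathcal{E}.
\end{eqnarray*}
The whole proposition thus reduces to establishing positivity of \(\hat{\rho}-\hat{p}-3\phi\mathcal{E}\) on \(\mathcal{H}\).

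Next I would bring in the hypotheses. Since \(\mathcal{S}\) is non-minimal and lies in a DH the sheet expansion is strictly positive, \(\phi>0\), and by assumption the Weyl scalar is strictly negative, \(\mathcal{E}<0\), so the algebraic term obeys \(-3\phi\mathcal{E}>0\). The small-pressure assumption is what disposes of the two remaining terms: it renders the first-order matter gradients negligible, so \(\hat{p}\) drops out directly—consistently, the LRS II momentum constraint \(\hat{p}=-(\rho+p)\mathcal{A}\) from \cite{cc1} then forces \(\mathcal{A}\to0\), matching the \(\mathcal{A}\geq0\) regime of Proposition \ref{prop20}—and likewise \(\hat{\rho}\) is subdominant to the algebraic curvature term. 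With both derivative terms dropped, \eqref{fent2} gives \(\hat{\mathcal{E}}=-\tfrac{3}{2}\phi\mathcal{E}>0\) and hence \(\widehat{\mathcal{L}_l\chi}\approx-3\phi\mathcal{E}>0\), which is the assertion. The standing assumptions \(\mathcal{C}>1\) and \(\Theta\geq0\) place us in the unstable, area-nondecreasing DH regime in which the sign \(\mathcal{E}<0\) is the admissible one and \(\dot{\rho}=-\Theta(\rho+p)\leq0\), keeping everything consistent with Proposition \ref{prop20} and Corollary \ref{cor1}.

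The step I expect to be the genuine obstacle is the control of \(\hat{\rho}\), because in a perfect fluid the radial density gradient is free initial data, not fixed by the geometric constraints, so its smallness cannot be read off from the field equations alone. One might hope to bound it using \eqref{fir2}, but under the small-pressure reduction that inequality yields \(\hat{\rho}\leq-\tfrac{(\mathcal{C}-1)\Theta(\rho+p)}{1+\mathcal{C}}\leq0\), an upper bound of the wrong sign for proving positivity of \(\widehat{\mathcal{L}_l\chi}\). The argument therefore genuinely rests on the paper's convention that ``sufficiently small pressure'' forces all first-order derivative terms to be negligible against the leading term \(-3\phi\mathcal{E}\); making this precise (quantifying how small \(p\), and hence \(\hat{\rho}\), must be relative to \(\phi|\mathcal{E}|\)) is the delicate point. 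Once it is in hand, Proposition \ref{prop22} follows immediately, since rewriting the displayed identity as \(\widehat{\mathcal{L}_l\chi}=3\hat{\mathcal{E}}+\tfrac{3}{2}\phi\mathcal{E}\) shows that \(\widehat{\mathcal{L}_l\chi}>0\) with \(\tfrac{3}{2}\phi\mathcal{E}<0\) forces \(\hat{\mathcal{E}}>0\).
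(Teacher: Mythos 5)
Your reduction of the statement to the sign of \(\widehat{\mathcal{L}_l\chi}=\hat{\rho}-\hat{p}-3\phi\mathcal{E}\) (i.e.\ to \(\hat{\rho}>3\phi\mathcal{E}\) once \(\hat{p}\) is dropped) is exactly the paper's equation \eqref{df}, and your observations that \(\hat{p}\simeq0\) forces \(\mathcal{A}\simeq0\) and that \eqref{fir2} then gives \(\hat{\rho}\leq0\) are both correct. But the step you flag as "the genuine obstacle" is in fact a gap you do not close. The hypothesis of the proposition is that derivatives of the \emph{pressure} are negligible; it says nothing about \(\hat{\rho}\), and the paper's own usage confirms this: Corollary \ref{cor1} and Proposition \ref{prop20} establish that \(\hat{\rho}\) is \emph{strictly negative} in this regime, and the subsequent proof of Proposition \ref{prop22} (hence \ref{thgr}) relies on \(\hat{\rho}<0\) being a genuine, non-negligible term. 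Declaring \(\hat{\rho}\) "subdominant to \(-3\phi\mathcal{E}\)" is therefore an added assumption, not a consequence of the hypotheses, and since \(\hat{\rho}\leq0\) competes against \(-3\phi\mathcal{E}>0\) with the wrong sign, your direct computation leaves \(\widehat{\mathcal{L}_l\chi}\) of indeterminate sign.

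The idea you are missing is how the paper evades ever having to bound \(\hat{\rho}\): it argues by contradiction through the \emph{time} derivative. Assuming \(\widehat{\mathcal{L}_l\chi}<0\), the coupled inequality \eqref{fir1}/\eqref{fir10} (numerator and denominator of the ratio sharing a sign for \(\mathcal{C}>1\)) forces \((\mathcal{L}_l\chi)^{\cdot}<0\) as well. Unlike the radial gradient, the time derivative is \emph{not} free data: it is fixed by the evolution equations \(\dot{\rho}=-\Theta(\rho+p)\) and \eqref{fent1}, and on the MOTS it collapses to an expression of the form \(-3\phi\mathcal{E}+(\Theta+\phi)(\rho+p)\), which is manifestly positive under \(\mathcal{E}<0\), \(\Theta\geq0\), \(\phi>0\), \(\rho+p>0\). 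This contradiction shows \((\mathcal{L}_l\chi)^{\cdot}>0\), and the same sign-coherence transfers the conclusion back to \(\widehat{\mathcal{L}_l\chi}>0\). In short: the paper trades the uncontrollable \(\hat{\rho}\) for the fully constrained \(\dot{\rho}\) via \eqref{fir10}; without that transfer your argument cannot be completed from the stated hypotheses.
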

\begin{proof}
From Proposition \ref{prop20}, if the isotropic pressure is small enough so that its derivative along \(u^{\mu}\) and \(e^{\mu}\) are negligible, then \(\hat{\rho}\) is negative for \(\mathcal{C}>1\). It is also seen from the propagation for the isotropic pressure \cite{cc1}

\begin{eqnarray*}
\hat{p}=-\mathcal{A}\left(\rho+p\right),
\end{eqnarray*}
that \(\mathcal{A}\) is neglible as well so that \(\mathcal{H}\) is taken as non-accelerating. Assume to the contrary that \(\mathcal{L}_{l}\chi\) is decreasing along \(e^{\mu}\). By Corollary \ref{cor1}, \(\mathcal{L}_l\chi\) is also decreasing along \(u^{\mu}\) and so we have that

\begin{eqnarray*}
-3\phi\mathcal{E}+\left(\Theta+\phi\right)\left(\rho+p\right)<0.
\end{eqnarray*}
For the above inequality to hold, we must have that \(\mathcal{E}\) is positive, which of course contradicts the assumption that \(\mathcal{E}<0\), and therefore \(\mathcal{L}_l\chi\) must instead be increasing along \(u^{\mu}\). And since the signs of the derivatives of \(\mathcal{L}_l\chi\) along \(u^{\mu}\) and \(e^{\mu}\) must be the same from \eqref{fir10}, \(\mathcal{L}_l\chi\) must be increasing along \(e^{\mu}\).\qed
\end{proof}

\noindent\textit{Proof of Proposition \ref{prop22}}: To prove Proposition \ref{prop22}, we note that the statement that \(\mathcal{L}_{l}\chi\) is increasing along \(e^{\mu}\) can be written explicitly as

\begin{eqnarray}\label{df}
\hat{\rho}>3\phi\mathcal{E}.
\end{eqnarray}
Now, we can rewrite \eqref{fent2} as

\begin{eqnarray*}
\hat{\mathcal{E}}=-\frac{1}{6}\hat{\rho}+\frac{1}{2}\left(\hat{\rho}-3\phi\mathcal{E}\right),
\end{eqnarray*}
so that the above expression is strictly positive, where we use \eqref{df} and the fact that \(\hat{\rho}<0\).\qed

Indeed, it then follows that

\begin{eqnarray*}
-\frac{2}{3}\hat{\rho}-\hat{\mathcal{E}}>0,
\end{eqnarray*}
which verifies the claim of Proposition \ref{thgr}.


\section{Conclusion}\label{8}


In this work we have carried out a careful analysis of stability of cross sections of certain black hole horizons and their relationship to the horizon area and the topology. We employed the 1+1+2 covariant formulation and limited our focus to the case of LRS II spacetimes with a perfect fluid geometry, and where the topology is concerned, we extended to the imperfect case with non-vanishing heat flux. 

Firstly, it was shown that the MOTS stability operator is self-adjoint, since the linear term vanishes as it is acted on by the vanishing rotation one-form on the MOTS. (The vanishing of this one-form is true for more general LRS spacetimes which include rotation and/or spatial twist. Hence, if these solutions admit MOTS, the operator is self adjoint.) In spherical symmetry, the stability operator simplifies significantly and the constant function \(Psi = const\) is the eigenfunction of the principal eigenvalue. An upper area bound on the horizon area containing stable MOTS was obtained, with a strict inequality demonstrated, after it was shown that the eigenvalue of the operator is non-zero. Stable MOTS in timelike membranes were also ruled out, even with the introduction of heat flux. These findings confirms the rarity of stable MOTS as well as DH that one may associate with black holes.

We briefly discussed the topological restrictions on the MOTS and in the case where one extends to an imperfect fluid with non-vanishing heat flux, stability imposes a relationship between the flux and the local energy density, which ensure spherical topology of the MOTS. This relationship, in the case of positive heat flux leads to a non-existence result, for which the precise condition in terms of the spacetime variables is provided. In particular, it is shown that for a non-accelerating imperfect LRS II solution with vanishing anisotropy, if on a DH one has that

\begin{eqnarray*}
\hat{\mathcal{E}}\geq-\phi\left(\frac{9}{2}\mathcal{E}+\frac{2}{3}\Theta^2\right),
\end{eqnarray*}
then, the DH cannot contain a stable MOTS.

Finally, since the eigenvalue is a linear combination of the local energy density and the electric Weyl scalar and can be written as the directional derivative of the outgoing null expansion scalar along the unit normal to the MOTS, we considered how these quantities behave along this direction. There are values of the function \(\mathcal{C}\) (even for the case of DH) - used to determine the horizon signature - which imply that the MOTS are necessarily unstable. Under certain mild assumptions, it is shown that for an unstable MOTS contained in a DH the local energy decreases along the unit normal to the MOTS while the Weyl distortion increases. It then follows that, for small \(t>0\), the variation of the outgoing null expansion of \(\mathcal{S}=\mathcal{S}_0\) is less than that of \(\mathcal{S}_t\).

The stability analysis here could be extended to more general MOTS embedded in general 1+1+2 spacetimes. We briefly mentioned this in Remark 1 and propose a variation of the stability that could be performed. However, as a start, one could study the stability using the formulation as developed by Anderson \textit{et al.}. In this case one would certainly encounter much more difficulty, not only due to the presence of more terms for the operator, but also the fact that \(s_{\mu}\) will take a much more complicated form. In addition, the Laplacian will not vanish and will take a very messy form. As a first step towards such a generalisation one can impose on the spacetime the conditions for which \(s_{\mu}=0\) to bring the operator into a self-adjoint form. In fact, one can even replicate the form of the operator in this work by further assuming that scalars can be expanded up to second order on the MOTS so that second surface derivatives are negligible. 

There is another reason for interest in this paper, which is linked to an ongoing work by the same authors. The aim is to try to look at MOTS in perturbed LRS spacetimes and the behaviour of the horizons they foliate. Indeed, the one-form \(s_{\mu}\), as well as the Laplacian, is in general non-zero. It is of interest to see whether there are constraints on the contributing terms to the Euler characteristic \(\underline{\chi}(\mathcal{S})\) of MOTS in the perturbed spacetime and how these constraints influence the form of the operator. Additionally, it will be a good exercise to study and obtain conditions (and implications) under which the form of the stability operator remains invariant under the perturbation of the underlying spacetime. In fact, this problem is intricately linked to the problem of finding interior MOTS that have recently been investigated (see \cite{ib2,ib3,ib8,ib9}). In \cite{ib8,ib9}, the authors studied MOTS in the interior of a Schwarzschild black hole by adapting to the Painlev\`{e}-Gullstrand ``\textit{horizon penetrating coordinates}". In these coordinates, the constant `time" slices are flat but non-static. (The flatness property allows one to study surfaces using analogous approaches in the standard Euclidean case and the non-staticity means that one also captures non-minimal MOTS). The transformed metric takes a form with a twist term. This therefore would suggest that, in order to study MOTS not lying in the \(r=2m\) hypersurface, from context of the 1+1+2 approach, one may consider the transformed metric metric as a perturbation of the Schwarzschild spacetime in standard coordinates. In this case, it is then possible to conduct a detailed analytic study of these interior MOTS, as perturbation of Schwarzschild spacetime in our approach has been well developed \cite{cc2}. Furthermore, successful implementation of this task would allow us to generalise to a general LRS spacetime, which could potentially capture interior MOTS in a range of spacetimes for which details are only known of those standard MOTS on the boundary, i.e. LTB, FLRW, etc.


\section*{Acknowledgement}

The authors would like to thank the anonymous referees for valuable comments and suggestions, which have significantly improved the quality and clarity of the manuscript. AS acknowledges that this work was supported by the Institute for Basic Science (IBS-R003-D1). PKSD is supported by the First Rand Bank, South Africa.

\end{document}